\newtheorem{theorem}{Theorem}
\newtheorem{proposition}{Proposition}
\newtheorem{lemma}[theorem]{Lemma}
\begin{document}
	\title{Reconfigurable Intelligent Surfaces Empowered Green Wireless Networks with User Admission Control
	}
	
	\author{Jinglian He, \emph{Student Member, IEEE}, Yijie Mao, \emph{Member, IEEE}, Yong Zhou, \emph{Member, IEEE}, Ting Wang, \emph{Senior Member, IEEE}, Yuanming Shi, \emph{Senior Member, IEEE}
		
		\thanks{This work was supported in part by the Natural Science Foundation of Shanghai under Grant No. 21ZR1442700, in part by Shanghai Rising-Star Program under Grant No. 22QA1406100, in part by the National Natural Science Foundation of China (NSFC) under grant 62001294, in part by the National Nature Science Foundation of China under Grant 62201347, and in part by Shanghai Sailing Program under Grant 22YF1428400.
			
			J. He is with the School of Information Science and Technology, ShanghaiTech
			University, Shanghai 201210, China, Shanghai Institute of Microsystem
			and Information Technology, Chinese Academy of Sciences, Shanghai
			200050, China, and also with the University of Chinese Academy of Sciences,
			Beijing 100049, China (e-mail: hejl1@shanghaitech.edu.cn).
			
			Y. Mao Y. Zhou and Y. Shi are with the School of Information Science and Technology, ShanghaiTech University, Shanghai 201210, China (e-mail: \{maoyj, zhouyong, shiym\}@shanghaitech.edu.cn).
			
			Ting Wang is with the Software Engineering Institute, Shanghai Key Laboratory of Trustworthy Computing, East China Normal University, Shanghai, China (twang@sei.ecnu.edu.cn).
	}
		\thanks{\emph{Corresponding author: Yijie Mao.}}
	}
	
	\maketitle
	
	\thispagestyle{empty}
	
	\IEEEpeerreviewmaketitle

	\begin{abstract}
		Reconfigurable intelligent surface (RIS) has emerged as a cost-effective and energy-efficient technique for 6G.
		By adjusting the phase shifts of passive reflecting elements, RIS is capable of  suppressing the interference and combining the desired signals constructively at receivers, thereby significantly enhancing the performance of communication system.
		In this paper, we consider a green multi-user multi-antenna cellular network, where multiple RISs are deployed to provide energy-efficient communication service to end users.
		We jointly optimize the phase shifts of RISs, beamforming of the base stations, and the active RIS set with the aim of minimizing the power consumption of the base station (BS) and RISs subject to the quality of service (QoS) constraints of users and the transmit power constraint of the BS. However, the problem is mixed combinatorial and non-convex,  and there is a potential infeasibility issue when the QoS constraints cannot be guaranteed by all users. 
		To deal with the infeasibility issue, we further investigate a user admission control problem to jointly optimize the transmit beamforming, RIS phase shifts, and the admitted user set. 
		A unified alternating optimization (AO) framework is then proposed to solve both the power minimization and user admission control problems.
		Specifically, we first decompose the original non-convex problem into several rank-one constrained optimization subproblems via matrix lifting.
		A difference-of-convex (DC) algorithm is then developed to solve each decomposed subproblem.
		The proposed AO framework efficiently minimizes the power consumption of wireless networks as well as user admission control when the QoS constraints cannot be guaranteed by all users.
		To further address the complexity-sensitive issue for practical implementation, we propose an alternative low-complexity beamforming and RISs phase shifts design algorithm based on zero-forcing (ZF)  to enable the green cellular networks.
		
	\end{abstract}
	
	\begin{IEEEkeywords}
	Reconfigurable intelligent surface, green communications, power control, alternative optimization, difference-of-convex.
	\end{IEEEkeywords}
	
	\section{Introduction}
	
	In recent years, reconfigurable intelligent surface (RIS) has emerged as a promising technique for 6G \cite{rajatheva2020white,yuan2021reconfigurable,yan2019passive, yuan2021wc}.
	In particular, RIS is a two-dimensional meta-surface consisting of a large number of passive reflecting elements with reconfigurable phase shifts, each of which can be dynamically adjusted by a software controller \cite{wu2019ris, christos2018meta}.
	Due to the thin films form, RIS can be deployed onto the walls of high-rise buildings with a low deployment cost, which makes it easy to be integrated into wireless communications\cite{basar2019wireless, wang2020channel,he2020coordinated,wang2022tgcn}.
	RIS fundamentally changes the electromagnetic wave propagation and offers a high quality of service (QoS) for users in cellular networks \cite{wu2019towards}.
	By jointly optimizing the beamforming vectors of the base station (BS) and the phase shifts of RIS, the RIS-assisted wireless network is capable of improving the signal propagation conditions\cite{bjornson2020rayleigh}, enhancing the spectral efficiency\cite{guo2020weighted,shu2021tcom}, and reducing the power consumption \cite{huang2019reconfigurable,shi2021cc}.
	To ensure green and sustainable wireless networks, energy efficiency is a key metric in 6G networks \cite{mahapatra2016ee}.
	RIS is regarded as a promising approach to improve the energy efficiency in 6G.
	Thanks to its passive reflecting elements, RIS forwards the incoming signal without employing any power amplifier.
	The power consumption of RIS is therefore much lower than traditional methods such as amplify and forward (AF) relays and backscatter communications \cite{wu2019intelligent}.

	A number of efforts have been devoted to improve the power efficiency of wireless communication based on RIS.
	In \cite{hua2021greenris}, the authors designed a task selection strategy and leveraged a RIS to reduce the overall power consumption in green edge interference networks.
	The authors in \cite{zhou2020hardware} investigated the hardware impairments in RIS-assisted multiple-input single-output (MISO) broadcast channels (BC), and derived a closed-form solution for maximizing the energy efficiency. 
	However, both \cite{hua2021greenris} and \cite{zhou2020hardware} only considered the power consumption of the BS.
	Due to the hardware static power consumption of the RISs introduced by phase shifts adjusting, the power consumption of RIS should also be considered when maximizing the energy efficiency of networks, e.g., considering both the power consumption of the BS and that of all RISs.
	In addition to the limitation of the power consumption model, \cite{hua2021greenris} and \cite{zhou2020hardware} only deployed one RIS  in the considered networks.
	Moving to the power consumption of multiple RISs, the authors in \cite{wang2020active} leveraged multiple RISs to maximize the received power for downlink point-to-point millimeter wave communications.
	The authors in \cite{yang2021distributed} investigated the RIS selection problem by taking the power consumption of RISs into consideration. 
	In \cite{sun2020green}, the authors studied the power minimization problem by jointly optimizing the transmit beamforming at the BS and multi-RIS selection.
	These works dynamically controlled the on-off status of each RIS as well as optimized the RIS phase shifts of the active RISs in order to maximize the network energy efficiency.

	\begin{table*}[t]
		\centering
		\caption{Survey of existing works for green cellular network.}
		\label{table1}
		\resizebox{\linewidth}{!}{
			\begin{tabular}{|c|c|c|c|c|c|c|c|c|}
				\hline
				\multirow{2}{*}{Work} & \multicolumn{2}{c|}{Optimization problem} & \multicolumn{4}{c|}{Optimization variables}& \multirow{2}{*}{Number of RISs}\\
				\cline{2-3}\cline{4-7}
				& Network power minimization & User admission control & BS beamforming  &RIS phase shifts & Active RIS set& Active user set&\\
				\hline
				\cite{hua2021greenris} &  \checkmark& \ding{53}&  \checkmark&\checkmark &\ding{53} &\ding{53} &Single\\
				\hline
				\cite{zhou2020hardware} &  \checkmark& \ding{53}& \checkmark&  \checkmark&\ding{53} & \ding{53}&Single\\
				\hline
				\cite{wang2020active} &  \checkmark& \ding{53}& \checkmark& \checkmark& \ding{53}& \ding{53}&Multiple\\
				\hline
				\cite{yang2021distributed} &  \checkmark& \ding{53}& \checkmark&\checkmark & \checkmark& \ding{53}&Multiple\\
				\hline
				\cite{sun2020green} &  \checkmark& \ding{53}&  \checkmark& \checkmark&\checkmark &\ding{53} &Multiple\\
				\hline
				\cite{evaggelia2008ac} & \ding{53}&  \checkmark& \checkmark& \ding{53}& \ding{53}&\checkmark &Zero\\
				\hline
				\cite{shi2016ac} & \checkmark&  \checkmark& \checkmark& \ding{53}& \ding{53}& \checkmark&Zero\\
				\hline
				\cite{zhao2015uac} & \ding{53}&  \checkmark& \checkmark& \ding{53}& \ding{53}&\checkmark &Zero\\
				\hline
				\cite{lid2021uac} & \ding{53}&  \checkmark&\checkmark&\checkmark & \ding{53}& \checkmark&Single\\
				\hline
				This work& \checkmark&  \checkmark&  \checkmark& \checkmark& \checkmark& \checkmark& Multiple\\
				\hline
		\end{tabular}}
		\label{table_MAP}
	\end{table*}

	All the above works \cite{huang2019reconfigurable,hua2021greenris,zhou2020hardware,wang2020active,yang2021distributed,sun2020green} for green cellular networks, however, do not consider the infeasibility issue when the QoS constraints cannot be satisfied by all users.
	User admission control is a powerful solution to address the infeasibility problem by selecting a subset of users for communication.
	In \cite{evaggelia2008ac}, the authors proposed two computationally efficient algorithms to  jointly optimize the multi-user transmit beamforming and the active user set.
	By employing a sparse optimization framework, the authors in  \cite{shi2016ac} considered the network power minimization and user scheduling with multicast beamforming.	
	In \cite{zhao2015uac}, the authors dealt with the admission control problem by converting the problem into its dual downlink form and then reformulating the problem as a sparsity-maximization problem.
	The aforementioned studies \cite{evaggelia2008ac,shi2016ac,zhao2015uac} only focused on the user admission control problem for conventional multi-user transmission without  RIS.
	For RIS-assisted user admission control, the author in \cite{lid2021uac} considered maximizing the number of users that can be served with certain QoS requirements by optimizing the transmit beamforming and phase shifts of RIS.
	However, the work in \cite{lid2021uac} 
	only considered one RIS, which did not enjoy the degree of freedom by controlling the on-off status of multiple RISs. 
	To the best of our knowledge, this is the first work that jointly considers RIS scheduling and user admission control to minimize the power consumption of the BS and RISs in a multi-RIS assisted multi-user MISO BC.
	All the existing works do not address the infeasibility issue when the QoS constraints of  users are not satisfied in RIS empowered wireless networks.
	\subsection{Contributions}
	In Table \ref{table1}, we summarize the existing works for RIS-assisted green cellular networks, and illustrate our major contributions. 
	In this work, motivated by the limitation of the existing works, we fill in the research gap and deal with the infeasibility issue by jointly considering the power control and user admission control problems in order to guarantee the feasibility of the QoS constraints for all users as well as the transmit power constraint. 
	A unified optimization framework is proposed to jointly optimize the beamforming design of the BS, the phase shifts of RISs, as well as the active RIS/user set. The major contributions of this work are summarized as follows:
	\begin{itemize}
		\item We consider a MISO BC with multiple RISs and multiple users.
		We first formulate the joint transmit beamforming vectors, active RIS set, and phase shifts of RISs optimization problem to minimize the power consumption of the BS and RISs subject to the QoS constraints of users and the transmit power constraint of the BS.
		As the problem would encounter the infeasibility issue when the QoS constraints cannot be guaranteed by all users for a given power constraint of the BS, we then propose to solve a user admission control problem to address such an issue. Specifically, we formulate a joint transmit beamforming, RIS phase shifts, and admitted user set optimization problem to maximize the number of users that can be served under the QoS constraint of each admitted user and the transmit power constraint. 
		\item In order to minimize the network power consumption, we design an alternating optimization (AO) framework to iteratively optimize the transmit beamforming vectors, active RIS set, and phase shifts of RISs.
		In particular, we first decompose the original bi-quadratically constrained quadratic power minimization problem into two subproblems with non-convex quadratic constraints, i.e., a convex subproblem for transmit beamforming design, and a mixed combinatorial optimization subproblem for the joint optimization of the active RIS set and phase shifts of RISs. 
		The former is solved optimally while the latter is solved by the proposed low-complexity difference-of-convex (DC) programming framework. 
		\item Besides, we propose a low-complexity zero-forcing (ZF)-based algorithm to design the transmit beamforming vectors and the phase shifts of RISs. With a much lower computational complexity than the DC-based algorithm, the proposed ZF-based algorithm acts as an alternative approach in some practical delay-sensitive scenarios.
	\end{itemize}
	
	We conduct extensive simulations to evaluate the effectiveness of the proposed algorithms. Simulation results show that the proposed approaches successfully address the aforementioned infeasibility issue, and outperform the existing convex approximation approaches in terms of network power efficiency and user transmission rate in various simulation settings.
	\subsection{Organization and Notations}
	\emph{Organization: }The remainder of the paper is organized as follows.
	Section \ref{Sec_SysMod} presents the system model and problem formulation.
	Section \ref{Sec_NPM} and \ref{Sec_UAC} respectively specify the proposed optimization framework for the network power minimization and user admission control.
	Simulation results are shown in Section \ref{Sec_NumRes}. 
	Finally, Section \ref{Sec_Con} concludes the paper.
	
	\emph{Notations: }
	The symbol $\mathbb{E}(\cdot)$ denotes the statistical expectation and $\mathbb{R}^+$ denotes the non-negative real domain.
	The complex normal distribution is denoted as $\mathcal{C}\mathcal{N}(\mu,\sigma^2)$.
	The symbol $|\cdot|$, $||\cdot||$ and $||\cdot||_0$ denote the modulus, spectral norm and $l_0$ norm of a vector. 
	The transpose, conjugate transpose, trace operator, rank operator, Moore-Penrose pseudoinverse and diagonal matrix are denoted as $(\cdot)^{\sf T}$, $(\cdot)^{\sf H}$, $\mathrm{Tr}(\cdot)$, $\mathrm{Rank}(\cdot)$, $|\cdot|^{\dagger}$ and $\mathrm{diag}(\cdot)$ respectively.
	Notation $[\bm{x}]_{(1:N)}$ denotes the first $N$ elements of vector $\bm{x}$. 
	The identity matrix is denoted by $\bm{I}$.
	The symbol $\text{card}(\cdot) $ represents the cardinality of a given set. 
	The symbol $\partial_{\{\cdot\}}$ denotes the subgradient operation, and $\langle\bm{X}, \bm{Y}\rangle$ is the inner product of matrices $\bm{X}$ and $\bm{Y}$.
	Finally, $\Re(\cdot)$ denotes the real part of a complex number.

	\section{System Model and Problem Formulation} \label{Sec_SysMod}
	In this section, we first introduce the system model of a multi-RIS-assisted multi-user MISO downlink cellular network. 
	Then, we present the problem formulation of the network power minimization and user admission control.
	\subsection{System Model}
	We consider a downlink transmission model consisting of one $M$-antenna BS, $K$ single-antenna users, and $L$ distributed RISs. { In particular, the RISs are deployed to enhance the communication between the BS and users.}
	The users and RISs are respectively indexed by $\mathcal{K} = \{ 1, 2, \ldots, K\}$ and $\mathcal{L} = \{ 1, 2, \ldots, L \}$.
	
	Let $s_k \in \mathbb{C}$ and $\bm{w}_k\in \mathbb{C}^{M \times 1}$  denote the corresponding transmit symbol and beamforming vector at the BS for users $k$, respectively.
	The signal transmitted by the BS is expressed as
	\begin{eqnarray}\label{transmitted_signal}
		\bm x  = \sum_{k\in\mathcal{K}} \bm{w}_k s_k,
	\end{eqnarray}
	
	{Assume that user symbols $\{s_k|k \in \mathcal{K}\}$ are user-independent 
	and the transmit power of the BS is constrained by }
	\begin{eqnarray}\label{maximum transmit power constraint}
		\sum_{k\in \mathcal{K}} \| \bm{w}_k \|^2 \le P_{\mathrm{max}}.
	\end{eqnarray}

	%
	On-off RIS control is considered in this work. In particular, the power consumed by the active RISs is considered as part of the network power consumption, while an inactive RIS is switched off without consuming any power. A detailed power consumption model is specified in Subsection \ref{sec:power_model}.
	To minimize the network power consumption, we propose to dynamically switch off some RISs while satisfying the QoS requirements of all users.
	Let $\mathcal{A}\subseteq \mathcal{L}$ and $\mathcal{Z}\subseteq \mathcal{L}$ be the index set of active and inactive RISs, respectively (note that $\mathcal{A} \cup \mathcal{Z} = \mathcal{L}$).
	For an active RIS $l\in \mathcal{A}$, the phase-shift matrix can be optimized through a diagonal matrix $\bm \Theta_l = \mathrm{diag}(\theta_{l,1}, \theta_{l,2}, \ldots, \theta_{l,N_l})\in \mathbb{C}^{N_l \times N_l}$, where $\theta_{l,n}\in \mathbb{C}$ denotes the reflection coefficient of passive element $n$ equipped on RIS $l$ with $n\in \mathcal{N}_l=\{1,2,\cdots,N_l\}$.
	
	Let $\bm h_{d,k} \in \mathbb{C}^{M \times 1}$, $\bm T_l \in \mathbb{C}^{N_l \times M}$, $\bm h_{l,k} \in \mathbb{C}^{N_l \times 1},\  \forall \, l \in \mathcal{L}, \forall \, k \in \mathcal{K}$ denote the channel responses from the BS to user $k$, from the BS to RIS $l$, and from RIS $l$ to user $k$, respectively. 
	Although it is generally difficult to obtain perfect CSI, various efficient channel estimation methods proposed for RIS-assisted wireless networks can be adopted to provide accurate CSI of the
	channels, e.g., the alternating least squares method \cite{wei2021tc} and the compressive-sensing
	based method \cite{liu2020jsac}. 
	Meanwhile, imperfect CSI in RIS-assisted networks has been widely studied, and the influence of CSI imperfection has been widely known from current literature \cite{zhou2020robust,lu2020wcl,han2019tvt}. Such observation also holds for the system model considered in this work. 
	With the assistance of active RISs in set $\mathcal{A}$, the received signal $y_k\in \mathbb{C}$ at user $k$ is given as
	\begin{eqnarray}\label{received_signal}
		y_k = \left( \sum_{l \in \mathcal{A}} \bm h_{l,k}^{\sf{H}} \bm \Theta_l \bm T_l +\bm h_{d,k}^{\sf{H}}  \right) \sum_{i\in \mathcal{K}} \bm{w}_i s_i + z_k, \forall \, k\in \mathcal{K}, 
	\end{eqnarray}
	where $z_k \sim \mathcal{CN}(0, \sigma_k^2)$ is the additive white Gaussian noise (AWGN) at user $k$. 
	Due to the severe path loss, the power of the signals that are reflected by the RIS with two or more times is assumed to be negligible \cite{wu2018intelligent}.
	
	We assume user-independent detection.
	Based on (\ref{transmitted_signal}) and (\ref{received_signal}), the achievable signal-to-interference-plus-noise ratio (SINR) $\gamma_k$ for user $k$ is computed as
	\begin{align} \label{SINR_k}
		\gamma_{k} \left(\mathcal{A},\bm{w}_k,\{ \bm \Theta_l \}\right) =
		\frac{\left|\left( \sum_{l \in \mathcal{A}} \bm h_{l,k}^{\sf{H}} \bm \Theta_l \bm T_l + \bm h_{d,k}^{\sf{H}} \right) \bm{w}_k \right|^2}{\sum_{j \neq k} \left|\left( \sum_{l \in \mathcal{A}} \bm h_{l,k}^{\sf{H}} \bm \Theta_l \bm T_l + \bm h_{d,k}^{\sf{H}} \right)\bm{w}_j \right|^2 \!\!\!+ \! \sigma_k^2}. 
	\end{align}
	Each user $k$ in $\mathcal{K}$ has the following QoS constraint
	\begin{eqnarray}\label{QoS_con}
		\gamma_k\left(\mathcal{A},\bm{w}_k,\{ \bm \Theta_l \}\right) \geq \gamma_k^{\textrm{th}}, \forall \, k \in \mathcal{K},
	\end{eqnarray}
	where $\gamma_k^{\textrm{th}}$ denotes the SINR threshold of $s_k$. 

	\subsection{Power Consumption Model}\label{sec:power_model}
	The power consumption introduced by active RISs cannot be simply ignored since they are usually densely deployed and each has a hardware static power consumption. 
	Therefore, to design a power-efficient transmission network, we consider the following network power consumption model which takes the power consumption of both the BS and RISs into account.
	Specifically, the network power consumption of the considered multi-RIS-assisted system includes the transmit power of the BS, the total hardware static power consumption of the BS, and the total hardware static power consumption of all active RISs.
	%
	%
	We consider an empirical linear power consumption of the total cellular network, which is given by
	\begin{eqnarray} \label{Eq_NPC}
		{P}_{\text{total}}(\mathcal{A}, \{ \bm{w}_k \}) = \sum_{k\in \mathcal{K}} \frac{1}{\eta} \| \bm{w}_k \|^2 + P_{\mathrm{static}}+ \sum_{l \in \mathcal{A}}P_{\mathrm{RIS}}(N_l),
	\end{eqnarray}
	where $\eta$ is the drain efficiency of the radio frequency power amplifier at the BS with a typical value as 60$\%$\cite{huang2019reconfigurable}, 
	$P_{\mathrm{static}}$ is the total static power consumption of the BS and RISs (e.g., power consumption of
	the master control board and drive circuits of RIS \cite{wang2023arxiv}), and $P_{\mathrm{RIS}}(N_l)$ is the total power consumption of the elements in RIS $l$, where $N_l$ represents the number of elements in RIS $l$. 
	Note that the power consumption of RISs is mainly due to the tuning of the RIS reflecting elements. 
	The power consumption of RIS $l$ can be expressed as $P_{\mathrm{RIS}}(N_l) = N_l P_{\mathrm{RE}},\forall \, l \in \mathcal{A}$, where $P_{\mathrm{RE}}$ denotes the power consumption of each phase shift.
	
	Since the static power consumption of the BS and RISs $P_{\mathrm{static}}$ is constant and does not vary with the optimization variables, we can convert the network power consumption ${P}_{\text{total}}(\mathcal{A}, \{ \bm{w}_k \})$ in (\ref{Eq_NPC}) to the following simplified model
	\begin{eqnarray} \label{Eq_NPC1}
		P(\mathcal{A}, \{ \bm{w}_k \}) = \sum_{k \in \mathcal{K}} \frac{1}{\eta} \| \bm{w}_k \|^2 + \sum_{l \in \mathcal{A}}P_{\mathrm{RIS}}(N_l).
	\end{eqnarray}
	
	Note that some limitations of power amplifier (PA) in RIS-assisted wireless networks need to be addressed.
	Due to the nonlinear property of PA, there exists a nonlinear operation region where the amplification gain is nonlinear but the phase is almost perfectly preserved. 
	Therefore, the amplifier should be properly deployed with the power of the attenuated incident signal in its linear operation region \cite{long2021twc}. 
	Besides, some distortion reduction techniques (e.g., digital pre-distortion \cite{john2017tmtt}) can be employed for practical implementations.

	\subsection{Network Power Minimization Problem}
	Given the considered system model in (\ref{maximum transmit power constraint}), (\ref{QoS_con}) and (\ref{Eq_NPC1}), we aim at jointly optimizing the set of active RISs $\mathcal{A}$, the transmit beamformers $\{\bm{w}_k\}$, and the reflection coefficient matrix of the active RISs $\{\mathbf{\Theta}_l\}$ to minimize the total network power consumption $P(\mathcal{A}, \{ \bm{w}_k \})$, under the minimum QoS constraints of all users
	and the transmit power constraint of the BS.
	The formulated power minimization problem is given as
	\begin{subequations} \label{OP_Original}
		\begin{align}
			\hspace{-3mm} \mathscr{P}: \mathop{\text{minimize}}_{ \mathcal{A}, \{ \bm{w}_k \}, \{ \bm \Theta_l \}} \label{P_Obj}\hspace{2mm}  & P(\mathcal{A}, \{\bm{w}_k \})  \\
			\text{subject to} 
			\label{P_SINR} \hspace{2mm}  & \gamma_k(\mathcal{A},\bm{w}_k,\{ \bm \Theta_l \}) \ge \gamma_k^{\textrm{th}}, \forall \, k \in \mathcal{K},  \\
			\label{P_power} \hspace{2mm}  & \sum_{k\in \mathcal{K}} \| \bm{w}_k \|^2 \le P_{\mathrm{max}},  \\
			\label{P_unit} \hspace{2mm}  & | \theta_{l,n}| = 1,\ \forall l \in \mathcal{A},\ \forall n \in \mathcal{N}_l,
		\end{align}
	\end{subequations}
	where (\ref{P_unit}) is the unit-modulus constraint of each phase shift at the active RISs.
	
	Solving problem $\mathscr{P}$ is highly challenging due to the following three aspects: 1) the discrete RIS set variable $\mathcal{A}$;
	2) the non-convex SINR constraints (\ref{P_SINR});
	3) the non-convex modulus constraints (\ref{P_unit}). 
	In Section III, the proposed optimization framework to solve the problem is delineated. 
	
	{Besides the aforementioned optimization issues, problem $\mathscr{P}$  also encounters the infeasibility issue when the transmit power constraint of the BS is not large enough to support the QoS constraints $\gamma_k^{th}$ for $k\in\mathcal{K}$.} In the next subsection, we discuss the approach we proposed to deal with the problem when infeasibility arises.    
	
	\subsection{User Admission Control Problem}
	When problem $\mathscr{P}$ in (\ref{OP_Original}) becomes infeasible, either constraint (\ref{OP_Original}b) or (\ref{OP_Original}c) cannot be satisfied. An alternative approach is to select a subset of active users which satisfy both constraints. 
	Such an approach is also known as user admission control.
	{Specifically, we propose to maximize the number of active users that can be simultaneously served subject to the QoS constraint of each admitted user and the transmit power constraint of the BS.}
	In such setting,  all RISs are turned on to support the signal transmission of more active users.
	Denote the set of admitted users as $\mathcal{S}$ with $\mathcal{S}\subseteq \mathcal{K}$, the formulated admission control problem is given as
	\begin{subequations} \label{user_admission_control}
		\begin{align}
			\hspace{-3mm} \mathscr{U}: \mathop{\text{maximize}}_{ \{ \bm{w}_k \}, \{ \bm \Theta_l \}, \mathcal{S}} \label{P_Obj_s}\hspace{3mm}  & \text{card}(\mathcal{S})  \\
			\text{subject to}\ \ 
			\label{P_SINR_s} \hspace{3mm}  & \gamma_k(\mathcal{L},\bm{w}_k,\{ \bm \Theta_l \}) \ge \gamma_k^{\textrm{th}}, \forall \, k \in \mathcal{S},  \\
			\label{P_power_s} \hspace{3mm}  & \sum_{k\in \mathcal{S}} \| \bm{w}_k \|^2 \le P_{\mathrm{max}},  \\
			\label{P_unit_s} \hspace{3mm}  & | \theta_{l,n}| = 1,\ \forall \, l \in \mathcal{L},\ \forall \, n \in \mathcal{N}_l.
		\end{align}
	\end{subequations}
	Once the admitted user set $\mathcal{S}$ is determined by (\ref{user_admission_control}), the network power minimization problem could be further considered to enhance the network power control.
	Again, problem (\ref{user_admission_control}) is challenging to solve due to the combinatorial nature of the objective function (\ref{user_admission_control}a) as well as the non-convex constraints (\ref{user_admission_control}b) and (\ref{user_admission_control}d).
	
	To tackle the challenges in (\ref{OP_Original}) and (\ref{user_admission_control}), we propose a unified optimization framework to handle the two problems based on an AO approach, as delineated in Section III and Section IV.
	For the network power minimization problem, we propose to optimize $\mathcal{A}$, $\{ \bm{w}_k \}$, and $\{ \bm \Theta_l \}$ by alternatively solving a second-order cone programming (SOCP) problem and a mixed combinatorial optimization problem, which is presented in Section III.
	For the user admission control problem, we apply the same optimization flow to jointly optimize $\mathcal{S}$, $\{ \bm{w}_k \}$, and $\{ \bm \Theta_l \}$ in Section IV.
	
	\section{Alternating Framework for Network Power Minimization}\label{Sec_NPM}
	It is challenging to solve the network minimization problem $\mathscr{P}$ due to the combinatorial RIS set selection, the SINR constraint (\ref{OP_Original}b), and the non-convex constant modulus constraint (\ref{OP_Original}d).
	Recently, lots of existing works, e.g., \cite{huang2019reconfigurable,guo2020weighted} applied a powerful alternating optimization to design the active beamforming at the BS and the phase-shift matrix at the RIS,
	which always yields convex constraints given the phase-shift
	matrix (i.e., affine constraints \cite{huang2019reconfigurable} and quadratic constraints \cite{guo2020weighted}). 
    In this subsection, to decouple the optimization variables, we also employ an AO-based approach to iteratively optimize $\{\bm{w}_k\}$, and jointly optimize $\mathcal{A}$ and $\{\bm{\Theta}_l\}$. 
	Specifically, we first optimize $\{\bm{w}_k\}$ with given $\mathcal{A}$ and the corresponding RIS phase-shift matrices $\{\bm{\Theta}_l\}$ (in Section \ref{sec:3_a}).
	Then we jointly optimize $\mathcal{A}$ and $\{\bm{\Theta}_l\}$ by solving a mixed combinatorial optimization problem with given $\{\bm{w}_k\}$ (in Section \ref{sec:3_b}).
	To further improve the performance, we transform the mixed combinatorial optimization problem into a DC programming by using the binary relaxation and DC method (in Section \ref{sec:3_c}). 

	\subsection{Beamforming Optimization}
	\label{sec:3_a}
	Given the active RIS set $\mathcal{A}$ and the corresponding RIS phase-shift matrices $\{ \bm \Theta_l \}$, the transmit beamformers $\{\bm{w}_k\}$ are optimized by solving the network power minimization problem $\mathscr{P}(\{\bm{w}_k\})$ as detailed in this subsection.
	
	The concatenated channel response $\bm{{h}}_{k}^{\sf{H}} (\mathcal{A},\{ \bm \Theta_l \} )\in \mathbb{C}^{1\times M}$ of user $k$ is defined as
	\begin{eqnarray} \label{Eq_CompH}		
		\bm{{h}}_{k}^{\sf{H}} \big(\mathcal{A},\{ \bm \Theta_l \}\big) = \sum_{l \in \mathcal{A}} \bm{h}_{l,k}^{\sf{H}} \bm{\Theta}_{l} \bm{T}_{l} + \bm h_{d,k}^{\sf{H}},\ \forall k\in \mathcal{K},
	\end{eqnarray}
	which is fixed for a given $\mathcal{A}$ and $\{\bm{\Theta}_l\}$.
	Due to the fact that an arbitrary phase rotation of vector $\bm{w}$ does not affect the SINR constraint (\ref{P_SINR}) \cite{ami2006tsp}, we reformulate constraint (\ref{P_SINR}) as a second-order cone (SOC) constraint, which is given by 
	\begin{eqnarray}\label{Eq_SINRSOC}
		\mathcal{C}_1(\{ \bm{w}_k  \}):   \sqrt{\sum_{j \neq k} \frac{1}{\sigma_k^{2}}\left|\bm{{h}}_{k}^{\sf{H}}\big(\mathcal{A},\{ \bm \Theta_l \}\big) \bm{w}_j\right|^{2} \!+ \! 1} \leq \nonumber\\
		\frac{1}{\sqrt{\gamma_k^{\textrm{th}} \sigma_k^{2}}} \Re\left(\bm{{h}}_{k}^{\sf{H}}\big(\mathcal{A},\{ \bm \Theta_l \}\big) \bm{w}_k\right),\forall \, k \in \mathcal{K}.
	\end{eqnarray}
	
	Given the active RIS set $\mathcal{A}$ and the corresponding phase-shift matrices $\{ \bm \Theta_l \}$, the network power minimization problem in (\ref{OP_Original}) is equivalently transformed into the following problem $\mathscr{P}(\{\bm{w}_k\})$
	\begin{subequations} \label{OP_Omega}
		\begin{align}
			\mathscr{P}\big(\{ \bm{w}_k\} \big):\  \mathop{\text{minimize}}_{ \{ \bm{w}_k \}} 
			\hspace{2mm}& \sum_{k \in \mathcal{K}} \frac{1}{\eta} \| \bm{w}_k \|^2  \\
			\text{subject to} 
			\hspace{2mm}&  \mathcal{C}_1(\{ \bm{w}_k\} ), (\ref{P_power}).
		\end{align}
	\end{subequations}
	$\mathscr{P}(\{\bm{w}_k\})$ is a convex SOCP problem, which can be efficiently solved by using CVX \cite{cvx}. 
	
	\subsection{Joint Optimization for Active RIS Set and Phase Shifts}
	\label{sec:3_b}
	For given beamforming vectors, we then
	optimize the active RIS set $\mathcal{A}$ and the corresponding RIS phase-shift matrices $\{ \bm \Theta_l \}$.
	A naive method is to search over all possible active RIS sets.
	However, this method suffers from the practicability and scalability issues due to the exponential running time, e.g., it needs to check $2^L$ active RIS sets in the worst case.
	To reduce the computational complexity, we first leverage the semidefinite relaxation (SDR) technique, then we further improve the SDR in terms of its performance by proposing the DC algorithm in the next subsection.
	
	For given beamforming vectors $\{ \bm{w}_k \}$, problem $\mathscr{P}$ becomes
	\begin{subequations}  \label{OP_Theta}
		\begin{align}
			\mathscr{P}(\mathcal{A},\{ \bm \Theta_l \})\!\!:\mathop{\text{minimize}}_{ \mathcal{A},\{ \bm \Theta_l \}} \hspace{2mm}  & \sum_{l \in \mathcal{A}} P_{\mathrm{RIS}}(N_l)  \hfill\\
			\text{subject to} 
			\hspace{2mm} & \gamma_k (\mathcal{A},\{ \bm \Theta_l \})  \ge \gamma_k^{\textrm{th}}, \forall \, k \in \mathcal{K}.  
		\end{align}
	\end{subequations}

	To solve the combinatorial and non-convex problem (\ref{OP_Theta}), we first introduce an auxiliary variable vector $\bm{\beta}=\{\beta_1,\beta_2,\ldots,\beta_L\}$ to represent the indicators of active RISs, and each element $\beta_l \in \{0,1\}$ is a binary variable defined as
	\begin{equation}
		\beta_l=\left\{
		\begin{aligned}
			0 & , & \text{RIS}\  l\  \text{is inactive},  \\
			1 & , & \text{RIS}\  l\  \text{is active}.\ \
		\end{aligned}
		\right.
	\end{equation}
	%
	Let $\tilde{\theta}_{l,n} = \beta_l \theta_{l,n}$ and $\bm{\tilde{\theta}}_l = [\tilde{\theta}_{l,1},\tilde{\theta}_{l,2},\ldots,\tilde{\theta}_{l,N_l}]^{\sf{H}} \in \mathbb{C}^{N_l \times 1},\ \forall \, l \in \mathcal{L}$.
	We denote $\bm a_{k,j}(l)=\mathrm {diag}(\bm h_{l,k}^{\sf{H}})\bm T_l \bm{w}_j$, $\forall \, k, j \in \mathcal{K},\ \forall \, l \in \mathcal{L}$ and $ b_{k,j}=\bm h_{d,k}^{\sf{H}}\bm{w}_j $.
	Then, we have
	\begin{align}\label{equ:a}
		(\beta_l \bm h_{l,k}^{\sf{H}} \bm \Theta_l \bm T_l +\bm h_{d,k}^{\sf{H}})\bm{w}_j = \bm \tilde{\bm{\theta}}_l^{\sf{H}}\bm a_{k,j}(l)+b_{k,j} , \forall \, k, j \in \mathcal{K}, \forall \, l \in \mathcal{L}.
	\end{align}
	Based on the above auxiliary variables, the SINR constraints (\ref{P_SINR}) can be rewritten as
	\begin{eqnarray} \label{Eq_SINRk}
		\hspace{-16mm} & \gamma_k^{\textrm{th}} \left (\sum_{j \neq k} \Big| \sum_{l \in \mathcal{L}} \bm \tilde{\bm{\theta}}_l^{\sf{H}}\bm a_{k,j}(l) + b_{k,j} \Big|^2 + \sigma_k^2 \right) 
		\leq \nonumber\\
		&\left| \sum_{l \in \mathcal{L}} \bm \tilde{\bm{\theta}}_l^{\sf{H}}\bm a_{k,k}(l) + b_{k,k} \right|^2, \forall \, k\in \mathcal{K}.
	\end{eqnarray}
	The SINR constraint (\ref{Eq_SINRk}) is still non-convex.
	We combine $\bm{\tilde{a}}_{k,j}=[\bm a_{k,j}(1)^{\sf{H}},\bm a_{k,j}(2)^{\sf{H}},\dots,\bm a_{k,j}(L)^{\sf{H}}]^{\sf{H}}$ $\in \mathbb{C}^{{{N}_{\text{total}}} \times 1},\forall k,j \in \mathcal{K}$ and $\bm{\tilde{\theta}}=[\tilde{\bm{\theta}}_1^{\sf{H}},\tilde{\bm{\theta}}_2^{\sf{H}},\dots,\tilde{\bm{\theta}}_L^{\sf{H}}]^{\sf{H}} \in \mathbb{C}^{{{N}_{\text{total}}}\times 1}$, where ${{N}_{\text{total}}=\sum_{l \in \mathcal{L}}N_l}$ denotes the total number of passive reflecting elements at all RISs. 
	Consequently, we have $\sum_{l \in \mathcal{L}}\bm{\tilde{\theta}}_l^{\sf{H}}\bm a_{k,j}(l)=\bm{\tilde{\theta}}^{\sf{H}}{\bm{\tilde{a}}}_{k,j},\ \forall \, k, j \in \mathcal{K}$. 
	By further extending the cardinality, we can define the following variables
	\begin{eqnarray}
		\boldsymbol{R}_{k,j}=
		\left[
		\begin{array}{ll}{{\bm {\tilde{a}}}_{k,j} \bm {\tilde{a}}^{\sf{H}}_{k,j}} & {{\bm {\tilde{a}}_{k,j}} b_{k,j}^{\sf{H}}} \\ 
			{b_{k,j}{\bm {\tilde{a}}}_{k,j}^{\sf{H}}}
			& {0}\end{array}\right] \in \mathbb{C}^{({{N}_{\text{total}}}+1) \times ({{N}_{\text{total}}}+1)},
	\end{eqnarray}
	\begin{eqnarray}
		\bm{\hat{\theta}}=\left[\begin{array}{l}\bm{\tilde{\theta}} \\ {1}
		\end{array}
		\right]\in \mathbb{C}^{({{N}_{\text{total}}}+1) \times 1}.
	\end{eqnarray}
	Thereby, the non-convex constraint (\ref{Eq_SINRk}) is equivalently transformed into
	\begin{eqnarray}
		\mathcal{C}_2(\mathcal{L},\bm{\hat{\theta}}): \hspace{-6mm}  \gamma_k^{\textrm{th}} \left(\sum_{j \neq k} \left( \bm{\hat{\theta}}^{\sf{H}}\bm R_{k,j} \bm{\hat{\theta}} + |b_{k,j}|^2\right)  + \sigma_k^2 \right)  \leq \\ \bm{\hat{\theta}}^{\sf{H}}\bm R_{k,k} \bm{\hat{\theta}} + |b_{k,k}|^2, \forall \, k \in \mathcal{K}.
	\end{eqnarray}
	
	Based on the above introduced auxiliary variables, problem $\mathscr{P}(\mathcal{A},\{ \bm \Theta_l \})$ can be simplified to the following mixed-integer and quadratic non-convex optimization problem
	\begin{subequations} \label{OP_act}
		\begin{align}
			\mathop{\text{minimize}}_{ \bm \beta,  \bm{\hat{\theta}} }  \hspace{3mm} &\sum_{l \in \mathcal{L}} \beta_l P_{\mathrm{RIS}}(N_l)  \\
			\text{subject to} \hspace{6mm}
			\label{OP_act01}  \hspace{-3mm} & \mathcal{C}_2(\mathcal{L},\bm{\hat{\theta}}),  \\
			\label{OP_act_q1}\hspace{-3mm}  & |\bm{\hat{\theta}}_i| = \beta_l, \forall \, l \in \mathcal{L}, \forall i \in \mathcal{I}_l,\\ &|\bm{\hat{\theta}}_{N_{\text{total}}+1}| = 1, \\
			\label{OP_act_a} \hspace{-3mm} & \beta_l \in\{ 0,1\},\forall \, l \in \mathcal{L}, 
		\end{align}
	\end{subequations}
	where $\mathcal{I}_l = \{ \sum_{j=1}^{l-1}N_j + 1, \dots, \sum_{j=1}^{l}N_j \}$ denotes the passive reflecting elements set of RIS $l,\ \forall \, l \in \mathcal{L}$.
	If we obtain a feasible solution, denoted as $\bm{\hat{\theta}}^*$, by solving problem (\ref{OP_act}), then a feasible solution for the phase-shift vector can immediately be obtained by setting $\bm{\tilde{\theta}}^*\!=\![\bm{\hat{\theta}}^*\!/\!\bm{\hat{\theta}}^*_{{{N}_{\text{total}}}+1}]_{(1:{{N}_{\text{total}}})}$.
	To deal with the non-convex binary constraints (\ref{OP_act_a}), we relax the constraints (\ref{OP_act_a}) into convex unit interval constraints $\{0\leq \beta_l \leq 1,\ \forall \, l  \in \mathcal{L}\}$, which can then be solved by the proposed DC algorithm \cite{lee2011mixed}.
	Thus, problem (\ref{OP_act}) is relaxed to the following homogeneous non-convex quadratic constrained quadratic programs (QCQP) problem \cite{So2007On}
	\begin{subequations} \label{OP_NC}
		\begin{align}
			\mathop{\text{minimize}}_{ \bm \beta,  \bm{\hat{\theta}}}  \hspace{3mm} &\sum_{l \in 	\mathcal{L}} \beta_l P_{\mathrm{RIS}}(N_l)   \\
			\text{subject to} \hspace{6mm}
			\hspace{-3mm} &(\ref{OP_act}b), (\ref{OP_act}c), (\ref{OP_act}d),\\ 
			\hspace{-3mm}  & 0\leq \beta_l \leq 1, \forall \, l  \in \mathcal{L}. 
		\end{align}
	\end{subequations}
	
	\textbf{Discussion: }In this part, we provide some discussions on the impact of the solution to the above problems.
	For problem (\ref{OP_NC}), the successive convex approximation (SCA) method can be employed by relaxing the original problem as a SOCP problem.
	However, the SCA method may result in a suboptimal solution with performance degradation. 
	To further improve the performance, the SDR technique \cite{luo2010sdr} can be exploited by formulating the optimization problem as
	a semidefinite programming (SDP) form by lifting vector $\bm{\hat{\theta}}$ into a positive semidefinite (PSD) matrix $\bm{\hat{\Theta}}= {\bm{\hat{\theta}}} {\bm{\hat{\theta}}}^{\sf{H}} \in \mathbb{C}^{({{N}_{\text{total}}}+1) \times ({{N}_{\text{total}}}+1)}$  with $\bm{\hat{\Theta}} \succeq 0$ and ${\rm {rank}}(\bm{\hat{\Theta}})=1$.
	Let $\mathrm{Tr}(\bm R_{k,j} \bm{\hat{\Theta}}) = {\bm{\hat{\theta}}}^{\sf{H}}\bm R_{k,j} {\bm{\hat{\theta}}}$, the QCQP problem (\ref{OP_NC}) can be reformulated into a rank-one constrained  optimization problem, which is given by
	\begin{subequations} \label{OP_The_rank_one}
		\begin{align}
			\mathop{\text{minimize}}_{{\bm \beta}, \bm{\hat{\Theta}}} 
			\hspace{3mm} & \sum_{l  \in \mathcal{L}} {\beta}_l P_{\mathrm{RIS}}(N_l)  \\
			\text{subject to} \hspace{6mm}
			\hspace{-3mm}&\mathcal{C}_3(\mathcal{L},\hat{\bm{\Theta}}),\label{equ:C3}\\
			\hspace{-3mm}	& \bm{\hat{\Theta}}(i,i) = {\beta}_l^2,\\
			& \bm{\hat{\Theta}}({{N}_{\text{total}}}+1,{{N}_{\text{total}}}+1) = 1, \bm{\hat{\Theta}} \succeq 0,\\
			\hspace{-3mm} 	&\text{rank}(\bm{\hat{\Theta}})=1,\label{equ:rank1}\\
			\hspace{-3mm} 	&0\leq {\beta}_l \leq 1, \forall \, l \in \mathcal{L},
		\end{align}
	\end{subequations}
	where the constraint (\ref{equ:C3}) is defined as
	\begin{align}
		\mathcal{C}_3(\mathcal{L},\hat{\bm{\Theta}}):~\gamma_k^{\textrm{th}}\left(  \sum_{j \neq k}{\rm Tr}(\bm R_{k,j}\bm{\hat{\Theta}}) +  \sum_{j \neq k}|b_{k,j}|^2+ \sigma_k^2 \right)\leq \nonumber\\
		 {\rm Tr}(\bm R_{k,k}\bm{\hat{\Theta}}) + |b_{k,k}|^2, \forall \, k\in \mathcal{K}.
	\end{align}
	
	To solve problem (\ref{OP_The_rank_one}), the SDR approach first drops the rank-one constraint (\ref{equ:rank1}) and then solves the relaxed SDP problem \cite{luo2010sdr}. 
	The solution obtained by the SDR technique can satisfy the original problem (\ref{OP_The_rank_one}) if the solution meets the rank-one constraint $\text{rank}(\bm{\hat{\Theta}})=1$.
	However, when the dimension of the optimization variable is large, the SDR technique may fail to derive the solution satisfying the rank-one constraint \cite{chen2017tc}.
	Such an issue may yield performance degradation as suboptimal beamforming vectors and the early stopping in the procedure of alternating
	optimization.
	In wireless networks, as the number of RISs and/or passive reflecting elements increases, the performance of the SDR approach deteriorates since the probability of returning rank-one solutions is low \cite{chen2017tc}.
	
	To circumvent the limitations of SDR, in the next subsection, we propose a DC programming-based approach to accurately detect the feasibility of the non-convex constraints, thus yielding a better performance for the optimization of the active RIS set and RIS phase shifts.

	\subsection{ DC Programming Method}
	\label{sec:3_c}
	In this subsection, we first present a DC representation of the problem  (\ref{OP_The_rank_one}), then propose an efficient algorithm to solve it.
	
	\subsubsection{DC Representation for Rank-One Constraint}
	Firstly, we present an exact DC representation for the rank-one
	constraint. 
	For the matrix $\bm{\hat{\Theta}} \in \mathbb{C}^{N_{\text{total}+1}\times N_{\text{total}+1}}$, the rank-one constraint
	can be rewritten as $||\sigma_1(\bm{\hat{\Theta}}), \sigma_2(\bm{\hat{\Theta}}), \sigma_{N}(\bm{\hat{\Theta}})||_0=1$, where $\sigma_i(\bm{\hat{\Theta}})$ is the $i$-th largest singular value of matrix $\bm{\hat{\Theta}}$. 
	Note that the
	rank function is a discontinuous function. To reformulate a
	continuous function, we introduce an exact DC representation
	for the rank-one constraint as
	\begin{align}\label{equ:rank1dc}
		\text{rank}(\bm{\hat{\Theta}})=1\Leftrightarrow \text{Tr}(\bm{\hat{\Theta}})-\|\bm{\hat{\Theta}}\|_2=0.
	\end{align}		
It is noteworthy that the DC representation is a continuous
function.
	
	By adding the DC representation of the rank-one constraint into the objective function of problem (\ref{OP_The_rank_one}), we obtain
	\begin{subequations} \label{DC_OP_The}
		\begin{align}
			\mathscr{P}_{\text{DC1}}:\mathop{\text{minimize}}_{ {\bm \beta}, \bm{\hat{\Theta}}} 
			\hspace{3mm} &  \sum_{l  \in \mathcal{L}} {\beta}_l P_{\mathrm{RIS}}(N_l)+\rho\big(\text{Tr}(\bm{\hat{\Theta}})-\|\bm{\hat{\Theta}}\|_2\big) \\
			\text{subject to} \hspace{6mm}
			\hspace{-3mm}&	(\ref{OP_The_rank_one}b), (\ref{OP_The_rank_one}c), (\ref{OP_The_rank_one}d), (\ref{OP_The_rank_one}f), 
		\end{align}
	\end{subequations}
	where $\rho$ denotes the penalty parameter.
	
	\subsubsection{Difference of Strongly Convex Functions Representation}
	Although the DC program (\ref{DC_OP_The}) is non-convex,
	it has a good structure that can be exploited to develop an efficient algorithm by using SCA \cite{tao1997convex}.
	To this end, we first represent the objective function as the difference
	of two strongly convex functions and reformulate problem (\ref{DC_OP_The}) as
	\begin{align}\label{DC_theta}
		\mathop{\text{minimize}}_{ {\bm \beta}, \bm{\hat{\Theta}}} 
		&\sum_{l  \in \mathcal{L}} {\beta}_l P_{\mathrm{RIS}}(N_l)+\rho\big(\text{Tr}(\bm{\hat{\Theta}})-\|\bm{\hat{\Theta}}\|_2\big)+\Delta_{\mathcal{C}}(\bm{\hat{\Theta}}),
	\end{align}
	where $\mathcal{C}$ denotes the PSD cones satisfying the constraints in problem (\ref{DC_OP_The}) and  $\Delta_{\mathcal{C}}(\bm{\hat{\Theta}})$ is the indicator function, which is defined as
	\begin{equation}
		\Delta_{\mathcal{C}}(\bm{\hat{\Theta}})=\left\{
		\begin{aligned}
			0, \ \ & \bm{\hat{\Theta}}\in \mathcal{C},  \\
			+\infty,\ \  &\text{otherwise}.\ \
		\end{aligned}
		\right.
	\end{equation}
	It is easy to verify that problem (\ref{DC_theta}) has a special structure of minimizing the difference of two convex functions, which is given by
	\begin{eqnarray}\label{DC_1}
		\mathop{\text{minimize}}_{ {\bm \beta}, \bm{\hat{\Theta}}} \ \ 
		p(\bm{\hat{\Theta}})-q(\bm{\hat{\Theta}}),
	\end{eqnarray}
	where $p(\bm{\hat{\Theta}})$ denotes $ \sum_{l  \in \mathcal{L}} {\beta}_l P_{\mathrm{RIS}}(N_l)+\rho\text{Tr}(\bm{\hat{\Theta}})+\Delta_{\mathcal{C}}(\bm{\hat{\Theta}})+\frac{\alpha}{2}||\bm{\hat{\Theta}}||_F^2$ and $q(\bm{\hat{\Theta}})=\rho\|\bm{\hat{\Theta}}\|_2+\frac{\alpha}{2}||\bm{\hat{\Theta}}||_F^2$.
	Because of the additional quadratic terms (i.e., $\frac{\alpha}{2}||\bm{\hat{\Theta}}||_F^2$), $p(\bm{\hat{\Theta}})$ and $q(\bm{\hat{\Theta}})$are both $\alpha$-strongly convex functions.
	We can write the dual problem of (\ref{DC_1}) based on Fenchel’s duality \cite{rock2015convex} as 
	\begin{eqnarray}\label{DC_1_dual}
		\mathop{\text{minimize}}_{ {\bm \beta}, \bm{Y}} \ \ 
		q^*(\bm{Y})-p^*(\bm{Y}),
	\end{eqnarray}
	where $p^*$ and $q^*$ are the conjugate functions of $p$ and $q$, and matrix $\bm{Y}$ denotes the dual variable. 
	The conjugate function of $p$ is defined as
	\begin{eqnarray}\label{DC_2}
		p^*(\bm{Y})=\sup_{\bm{Y}\in \mathbb{C}^{(N_{\text{total}}+1)\times (N_{\text{total}}+1)}}\{\langle\bm{\hat{\Theta}},\bm{Y}\rangle-p(\bm{\hat{\Theta}}):\bm{\hat{\Theta}}\in\mathcal{X}\},
	\end{eqnarray}
	where $\langle\bm{\hat{\Theta}},\bm{Y}\rangle=\Re(\text{Tr}(\bm{\hat{\Theta}}^{\sf H}\bm{Y}))$ is computed according to Wirtinger calculus \cite{rock2015convex} in the complex domain and $\mathcal{X}$ denotes the feasible solution region of $\bm{\hat{\Theta}}$.
	
	Note that the primal and dual problems are still non-convex.
	By employing successive convex approximation \cite{tao1997convex}, we can iteratively update primal and dual variables as
	\begin{align}\label{t-iteration}
		&\bm{Y}^t= \mathop{\textrm{arg\,inf}}_{\bm{Y}}\  q^*(\bm{Y})\!-\!\big[p^*(\bm{Y}^{t-1})\!+\!\langle \bm{Y}\!\!-\!\bm{Y}^{t-1}, \bm{\hat{\Theta}}^t\rangle\big], \\
		&\bm{\hat{\Theta}}^{t+1} =\mathop{\textrm{arg\,inf}}_{ \bm{\hat{\Theta}}}\  p(\bm{\hat{\Theta}})-\big[q(\bm{\hat{\Theta}}^{t}) + \langle \bm{\hat{\Theta}}-\bm{\hat{\Theta}}^{t}, \bm{Y}^t\rangle\big].
	\end{align}
	According to Fenchel biconjugation theorem \cite{rock2015convex}, equation (\ref{t-iteration}) can be rewritten as $\bm{Y}^t  \in \partial_{\bm{\hat{\Theta}}^t}q$.
	Hence, at the $t$-th iteration, problem (\ref{DC_OP_The}) is linearized into the convex optimization problem as
	\begin{subequations} \label{t-DC_OP_The}
		\begin{align}
			\mathop{\text{minimize}}_{ {\bm \beta}, \bm{\hat{\Theta}}} 
			\hspace{3mm}&  p(\bm{\hat{\Theta}})-\langle \bm{{\hat{\Theta}}},\partial_{\bm{{\hat{\Theta}}}^{t-1}}q(\bm{\hat{\Theta}})\rangle  \\
			\text{subject to} \hspace{6mm}
				\hspace{-3mm}&(\ref{OP_The_rank_one}b), (\ref{OP_The_rank_one}c), (\ref{OP_The_rank_one}d), (\ref{OP_The_rank_one}f), 
		\end{align}
	\end{subequations}
	where $\bm{\hat{\Theta}}^{t-1}$ denotes the solution to problem (\ref{t-DC_OP_The}) at the $(t-1)$-th iteration and $\langle\bm{X}, \bm{Y}\rangle$ is computed by Wirtinger's calculus.
	Problem (\ref{t-DC_OP_The}) is convex and can be efficiently solved by CVX \cite{cvx}.
	The receiver beamforming vector $\bm{\hat{\theta}}^*$ can be obtained through Cholesky decomposition $\bm{\hat{\Theta}}^*=\bm{\hat{\theta}}^*(\bm{\hat{\theta}}^*)^H$\cite{higham1990analysis}.
	Note that the initial solution of the DC program may not be feasible due to the rank-one constraint (23).
	The initial feasible solution of the DC program could be derived by the SDR algorithm, which recovers the rank-one solution by the Gaussian randomization technique \cite{luo2020sdr}.
	Moreover, the solution of $\bm{\beta}$ in (32) may not be binary.
	To recover the binary $\bm{\beta}$, we first sort the relaxed ${\bm \beta}$, then iteratively add the RIS to the active RIS set according to the descent order of $\bm{\beta}$.

\begin{algorithm}[t]
	\caption{Proposed Algorithm for Network Power Minimization}\label{Alg}
	\KwIn{$\{\bm \Theta_l^0 \}$ and $\epsilon$.}
	\For{$t=1,2,\dots$}
	{   
		Given $\{\bm \Theta_l^{t-1} \}$, solve 	problem (\ref{OP_Omega}) to obtain $\{\bm{w}_k ^t\}$. \\
		Given $\{\bm{w}_k^t \}$, solve problem (\ref{DC_OP_The}) to obtain ${\bm \beta}^t$ and $\bm{\hat{\Theta}}^t$.\\
		\For{$i=1,2,...$}{
			Compute a subgradient of $\partial_{\hat{\bm{\Theta}}^{i-1}}\|\hat{\bm{\Theta}}\|_2$ and solve problem (\ref{t-DC_OP_The}) to obtain ${\hat{\bm{\Theta}}^i}$ and $\bm{\beta}^i$.
			\lIf{the objective value converges}{
				\textbf{break}
			}
		}
		Obtain $\hat{\bm{\theta}}^{t+1}$ via Cholesky decomposition.

		Sort ${\bm \beta}^t$ in ascending order as $\tilde{\bm \beta}^t$, where $\tilde{\bm \beta}^t_{\pi_ 1} \leq \tilde{\bm \beta}^t_{\pi_ 2}\leq \dots \leq \tilde{\bm \beta}^t_{\pi_ L}$.\\
		Initialize $J_{low}=0,J_{up}=L,J_0=0,flag=0$.\\
		\While{$J_{up}-J_{low}> 1$}
		{
			Set $J_0=\lfloor\frac{J_{up}+J_{low}}{2}\rfloor$. Let $|\bm{\hat{\theta}}^t_{N_{\text{total}+1}}| \!=\! 1, |\bm{\hat{\theta}}^t_i| \!=\! \tilde{\bm \beta}^t_{\pi_ l}  \rho_{\pi_ l},\! \forall \, \pi_ l \!\in\! \mathcal{L}, \!\forall \, i \!\in\! \mathcal{I}_{\pi_ l}$.\\
			Set $\tilde{\bm \beta}^t_{j}=0,\forall j\in \{1,\dots,J_0\}$, $\tilde{\bm \beta}^t_{j}=1,\forall j\in \{J_0+1,\dots,L\}$. \\
			Check the feasibility of problem (\ref{OP_act}).\\
			\eIf{problem (\ref{OP_act}) is feasible}
			{
				Recover $\bm{\tilde{\theta}}^t\!=\![\bm{\hat{\theta}}^t\!/\!\bm{\hat{\theta}}^t_{{{N}_{\text{total}}}+1}]_{(1:{{N}_{\text{total}}})}$, split $\bm{\tilde{\theta}}^t$ to get $\{ \bm{\tilde{\theta}}^t_1,\!\dots\!, \bm{\tilde{\theta}}^t_L\}$,
				$\mathcal{A}^t\!=\!\{ \pi_ l | \tilde{\bm \beta}^t_{\pi_ l}=1,\ \forall \, \pi_ l \in\! \mathcal{L}\}$, and $\bm \Theta^t_{l} = \mathrm{diag}\left( (\bm{\tilde{\theta}}^t_{ l})^{\sf{H}}\right), \forall \, l \in \mathcal{A}^t$.\\
				Set $J_{low}=J_0,flag=1$.
			}{Set $J_{up}=J_0$.} 
		}
		\lIf{The decrease of the network power consumption is below $\epsilon$ or flag=0}{
			\textbf{break}
		}
	}
	\KwOut{ $\mathcal{A}^t, \{ \bm{w}_k ^t\}, \{ \bm \Theta_l ^t\}$ }
	
\end{algorithm}		

	We present the proposed algorithm for solving the network power minimization problem (\ref{OP_Original}) in Algorithm~\ref{Alg}.
	To be specific, given $\{\bm \Theta^{t-1}_l\}$, problem (\ref{OP_Omega}) is a convex SOCP problem and thus can
	be efficiently solved via CVX~\cite{cvx} (line 2).
	Besides, given the results of problem (\ref{OP_Omega}), i.e., $\{\bm{w}_k^t\}$, problem (\ref{DC_OP_The}) is solved in two steps. 
	At step 1 (line 3-7), we obtain $\bm{\hat{\Theta}}^{*}$ via the proposed DC programming method. 
	At step 2 (line 8-23), we use a binary search to obtain the maximum number of inactive RIS.
	Specially, we first sort the ordering criterion ${\bm \beta}^t$ in an ascending order (line 8).
	Note that the relaxed ${\bm \beta}^t$ indicates the importance of RISs.
	Therefore, we iteratively add the RIS based on the order to determine the active RIS set (line10-20).
	The lower bound $J_{low}$ (resp. the upper bound $J_{up}$) is initially set as 0 (resp. $L$) (line 9). 
	At each round, we update $J_0$ as the average of $J_{up}$ and $J_{low}$ (line 11), and set the first $J_0$ RISs to be inactive while remaining others active (line 12). 
	Then, we check the feasibility of problem (\ref{OP_act}) (line 13). 
	If it is feasible, we recover $\bm \Theta_l^t$ based on the current settings of $\beta$ 
	(line 15) as well as $\bm{\tilde{\theta}}^*=[\bm{\hat{\theta}}^*/\bm{\hat{\theta}}_{({{N}_{\text{total}}}+1)}^*]_{(1:{{N}_{\text{total}}})}$ based on 
	$ \bm{\hat{\theta}}_{({{N}_{\text{total}}}+1)}^*$, where $[\bm{\hat{\theta}}^*/\bm{\hat{\theta}}_{({{N}_{\text{total}}}+1)}^*]_{(1:{{N}_{\text{total}}})}$ denotes 
	the vector that contains the first $N_{\text{total}}$ elements in $\bm{\hat{\theta}}^*/\bm{\hat{\theta}}_{({{N}_{\text{total}}}+1)}^*$. 
	The auxiliary binary variable set, denoted 
	as ${\bm \beta}^*$, is used to recover the active RIS set $\mathcal{A}^*=\{ l \,| \,{\beta}_{l}^*=1, \forall \, l \in \mathcal{L}\}$.
	By splitting $\bm{\tilde{\theta}}^*$, we obtain the set $\{ \bm{\tilde{\theta}}_1^*,\ldots,\bm{\tilde{\theta}}_L^*\}$, followed by $\bm \Theta_{l}^* = \mathrm{diag}((\bm{\tilde{\theta}}_{l}^*)^{\sf{H}}), \forall \, l \in \mathcal{L}$. 
	Otherwise, we update the upper bound as $J_0$ which means more active RISs are needed (line 18).
	
	
	Note that the proposed DC algorithm that jointly optimizes the active RIS set and phase shifts requires at most $\log(L)$ times to find the active RIS set, which is significantly lower than the exhaustive search approach that requires searching through $2^L$ possible combinations.

	\subsection{Low-Complexity Zero-Forcing (ZF) Method}\label{sec:ZF}
	In this subsection, we propose a low-complexity algorithm to minimize the network power consumption for the implemented hardware with worse computation capacity.
	
	We employ the ZF method to create orthogonal effective channels between the transmitter and receivers \cite{emil2014spm}.
	Since the ZF method for beamforming optimization at the BS has been widely investigated \cite{emil2014spm}, we devote to design a RIS shift optimization algorithm based on ZF in this subsection.
	We employ an exhausted search method to determine the active RIS set $\mathcal{A}$.
	
	Inspired by \cite{yang2021tc},  we first divide the elements of each RIS by $K$ parts, where each part of RIS elements only serves one user.
		Let $\tilde{\bm{\theta}}_k= [\bm{\tilde{\theta}}_{k,1}, \cdots, \bm{\tilde{\theta}}_{k,N_l}]^T$ represent the phase shifts of user $k$ and they can be adjusted by the RISs in $\mathcal{A}$.
		Let $\mathcal{I}_k$ represent the set of RIS elements adjusted to serve user $k$.
		Then, based on the ZF principle, we need to create orthogonal channels between the RISs and users.
		Recall the definition in (15), we denote $\bm{a}_{k,k}$ as $\bm a_{k,k}(l)=\mathrm {diag}(\bm h_{l,k}^{\sf{H}})\bm T_l \bm{w}_k$, $\forall \, k, \in \mathcal{K},\ \forall \, l \in \mathcal{L}$.
		It represents the effective channel between the BS and user $k$  via RIS $l$ for transmitting the intended signal of user $k$, and $\bm a_{k,j}(l)=\mathrm {diag}(\bm h_{l,k}^{\sf{H}})\bm T_l \bm{w}_j$ denotes the effective channel between the BS and user $k$  via RIS $l$ for transmitting the interfering signal of user $j$.
		Thus, the ZF phase shifts optimization problem of the active RISs can be formulated as:
	\begin{subequations}\label{opt:ZF}
		\begin{align}
			\mathop{\text{maximize}}_{\tilde{\bm{\theta}}_k}~~~ &|\tilde{\bm{\theta}}_k^H\bm{a}_{k,k}|\\
			\mathop{\text{subject to}}~~~&\tilde{\bm{\theta}}_k^H\bm{a}_{k,j}=0, \forall j\neq k,\label{equ:ZF_cons}\\
			&|\tilde{\theta}_{k,i}|=1, \forall i\in\mathcal{I}_k.\label{equ:unit_cons_ZF}
		\end{align}
	\end{subequations}
	Due to the non-convex unit-modulus constraints (\ref{opt:ZF}b), the conventional ZF method cannot be directly applied. With new non-convex constraints (\ref{opt:ZF}b), it is of importance to investigate the feasibility of (\ref{opt:ZF}). For the feasibility of problem (\ref{opt:ZF}), we have the following lemma.
	\begin{lemma}
		Problem (33) is feasible only if the following constraints are satisfied:
		\begin{align}\label{equ:lemma1}
			2\max_{i\in\mathcal{I}_k}|[\bm{a}_{k,j}]_i|\leq \sum_{i\in\mathcal{I}_k}|[\bm{a}_{k,j}]_i|, \forall j\neq k.
		\end{align}
		where $[\cdot]_i$ denotes the received signal of RIS phase shift $i$.
	\end{lemma}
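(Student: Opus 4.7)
The plan is to derive the stated necessary condition purely from the interference-cancellation constraint \eqref{equ:ZF_cons} together with the unit-modulus constraint \eqref{equ:unit_cons_ZF}, via the complex triangle inequality applied to a polygon-closure argument. The key observation is that \eqref{equ:ZF_cons} requires a sum of complex scalars of prescribed modulus (one for each reflecting element assigned to user $k$) to equal zero, and it is geometrically impossible to close such a polygon when one edge dominates the combined length of all the others.

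Concretely, I would first rewrite the ZF constraint $\tilde{\bm{\theta}}_k^{\sf H}\bm{a}_{k,j}=0$ in componentwise form, namely
\begin{equation*}
\sum_{i\in\mathcal{I}_k}\tilde{\theta}_{k,i}^{*}[\bm{a}_{k,j}]_{i}=0,\qquad \forall j\neq k.
\end{equation*}
Because $|\tilde{\theta}_{k,i}|=1$, each summand $\tilde{\theta}_{k,i}^{*}[\bm{a}_{k,j}]_{i}$ is a complex number whose modulus equals exactly $|[\bm{a}_{k,j}]_{i}|$; the phase factors $\tilde{\theta}_{k,i}^{*}$ are free to point anywhere on the unit circle. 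Hence \eqref{equ:ZF_cons} is feasible for a given $j$ if and only if the moduli $\{|[\bm{a}_{k,j}]_{i}|\}_{i\in\mathcal{I}_k}$ can be realized as the edge lengths of a closed polygon in $\mathbb{C}$.

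Next I would isolate the dominant term. Let $i^{\star}\in\arg\max_{i\in\mathcal{I}_k}|[\bm{a}_{k,j}]_{i}|$. Moving that term to the other side of the equation and taking moduli gives
\begin{equation*}
|[\bm{a}_{k,j}]_{i^{\star}}| \;=\; \Bigl|\!-\!\!\!\sum_{i\in\mathcal{I}_k\setminus\{i^{\star}\}}\!\!\tilde{\theta}_{k,i}^{*}[\bm{a}_{k,j}]_{i}\Bigr| \;\le\; \sum_{i\in\mathcal{I}_k\setminus\{i^{\star}\}}|[\bm{a}_{k,j}]_{i}|,
\end{equation*}
where the inequality is a direct application of the triangle inequality (valid regardless of the chosen phases). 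Adding $|[\bm{a}_{k,j}]_{i^{\star}}|$ to both sides yields exactly \eqref{equ:lemma1}, which establishes the claim. This derivation is almost mechanical and the only potential subtlety is making sure we are using the constraint in the correct orientation, i.e., that feasibility of \eqref{opt:ZF} implies \eqref{equ:lemma1} and not the converse; the lemma only asserts necessity, so I would not attempt to prove sufficiency here, although it is worth remarking that the condition is in fact tight for $|\mathcal{I}_k|\ge 2$ by the polygon-closure theorem in $\mathbb{C}$. The main obstacle is essentially notational bookkeeping (keeping track of which indices lie in $\mathcal{I}_k$ and ensuring the phase freedom of $\tilde{\theta}_{k,i}$ is invoked correctly); there is no hard analytic step.
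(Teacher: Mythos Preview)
Your proposal is correct and follows essentially the same route as the paper's proof: isolate one term of the zero-sum constraint, apply the triangle inequality using the unit-modulus condition, and add the isolated modulus back to both sides. The only cosmetic difference is that you pick the maximizing index $i^\star$ up front, whereas the paper derives the inequality for an arbitrary index and then applies $\max_{i\in\mathcal{I}_k}$ at the end.
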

	\begin{proof}
		Please refer to Appendix \ref{appd:lemma1}.
	\end{proof}
	If the received signals from RISs do not satisfy the condition in Lemma 1, then we need to solve the user admission control problem, which will be elaborated in Section IV.
	After checking the feasibility, finding the solution to the phase shifts optimization problem (\ref{opt:ZF}) is still challenging due to the non-convex unit modulus constraint (\ref{equ:unit_cons_ZF}).
	Without loss of generality, the term $|\tilde{\bm{\theta}}_k^H\bm{a}_{k,k}|$  in the objective function (\ref{opt:ZF})  can be expressed as a real number through an arbitrary phase rotation to phase shift $\tilde{\bm{\theta}}_k$ \cite{ami2006tsp}.
	Thus, we can convert the objective function (\ref{opt:ZF}) to
	\begin{align}
		\mathop{\text{maximize}}_{\tilde{\bm{\theta}}_k}~~~ &\Re(\tilde{\bm{\theta}}_k^H\bm{a}_{k,k}). \label{equ:opt1_ZF}
	\end{align}
	According to constraint (\ref{equ:ZF_cons}), $\tilde{\bm{\theta}}_k$ must lie in the orthogonal complement of the subspace span $\{\bm{a}_{k,j}, \forall j\neq k\}$ and the orthogonal projector matrix on this orthogonal complement is \cite{yang2021tc}
	\begin{align}
		\bm{Q}_k = \bm{I}-\bm{A}_k(\bm{A}_k^H\bm{A}_k)^{\dagger}\bm{A}_k^H,
	\end{align}
	where $\bm{A}_k=[\bm{a}_{k,1},\cdots,\bm{a}_{k,k-1},\bm{a}_{k,k+1},\cdots, \bm{a}_{K,k-1}]$.
	Therefore, $\tilde{\bm{\theta}}_k$ satisfying constraint (\ref{equ:ZF_cons}) can be expressed by 
	\begin{align}
		\tilde{\bm{\theta}}_k=\bm{Q}_k\bm{d}_k,\label{equ:theta_ZF}
	\end{align}
	where $\bm{d}_k$ is a complex vector. 
	Based on (\ref{equ:opt1_ZF}) and (\ref{equ:theta_ZF}), the ZF phase shifts optimization problem is equivalent to
	\begin{subequations}\label{equ:opt2_ZF}
		\begin{align}
			\mathop{\text{maximize}}_{\tilde{\bm{\theta}}_k, \bm{d}_k}~~~ &\Re (\tilde{\bm{\theta}}_k^H\bm{a}_{k,k})\\
			\mathop{\text{subject to}}~~~&\tilde{\bm{\theta}}_k=\bm{Q}_k\bm{d}_k,\label{equ:cons2_ZF}\\
			&|\tilde{\theta}_{k,i}|=1, \forall i\in\mathcal{I}_k.
		\end{align}
	\end{subequations}
	Due to the coupled variables $\tilde{\bm{\theta}}_k$ and $\bm{d}_k$ in constraint (\ref{equ:cons2_ZF}), we employ the barrier method to transform problem (\ref{equ:opt2_ZF}) to
	\begin{subequations}\label{equ:opt3_ZF}
		\begin{align}
			\mathop{\text{maximize}}_{\tilde{\bm{\theta}}_k, \bm{d}_k}~~~ &\Re (\tilde{\bm{\theta}}_k^H\bm{a}_{k,k}) - \delta||\tilde{\bm{\theta}}_k-\bm{Q}_k\bm{d}_k||^2\\
			\mathop{\text{subject to}}~~~
			&|\tilde{\theta}_{k,i}|=1, \forall i\in\mathcal{I}_k.
		\end{align}
	\end{subequations}	
	where $\delta$ represents the penalty factor.
	To solve the problem (\ref{equ:opt3_ZF}), we adopt an iterative algorithm to alternatively optimize $\tilde{\bm{\theta}}_k$ with fixed $\bm{d}_k$, and then update $\bm{d}_k$ with optimized $\tilde{\bm{\theta}}_k$ in the previous step, yielding an efficient closed-form solution in each step.

	\subsection{Computational Complexity and Convergence Analysis}\label{sec:analysis_npm}
	In this subsection, we analyze the proposed DC algorithm and ZF-based algorithm for the network power minimization problem.
	
	\textbf{Computational Complexity: }
	The minimization of the network power consumption contains two steps: 1) optimizing the beamforming vector $\{\bm{w}_k\}$ of BS in (\ref{OP_Omega}); 2) jointly optimizing the selected RIS set $\bm{\beta}$ and phase shifts of RIS $\{\hat{\bm{\Theta}}_l\}$ in (\ref{OP_Theta}).
	The optimization of the beamforming vector $\{\bm{w}_k\}$ is a convex SOCP problem with a computational complexity of $\mathcal{O}((MK)^{3.5})$ by using interior-point methods \cite{boyd2004convex}.
	Based on the proposed DC algorithm, we derive the solution to $\bm{\beta}$ and $\{\hat{\bm{\Theta}}_l\}$ by solving a sequence of the DC program $\mathscr{P}_{\text{DC1}}$.
	For each DC program, we need to solve the SDP problem (\ref{t-DC_OP_The}) at the $t$-th iteration.
	By employing the second-order interior point method \cite{luo2010sdr}, the computational complexity of each SDP problem is $\mathcal{O}((N_{\text{total}}^2+K)^{3.5})$. 
	We assume that the proposed DC program converges to critical points with $T > 1$ iterations, thus the computational complexity of problem $\mathscr{P}_{\text{DC1}}$ is $\mathcal{O}(T(N_{\text{total}}^2+K)^{3.5})$.
	
	The computational complexity of ZF mainly lies in the optimization of the channel matrix $\bm{Q}_k$ and exhausted search over all possible combinations of the active RIS sets, which is $\mathcal{O}(T(2^LN_{\text{total}}^3))$. 
	The number of RIS elements $N_{\text{total}}$ can be very large in practical scenarios, especially for large-scale wireless communication systems. 
		However, the number of RISs $L$ is typically much smaller than $N_{\text{total}}$ due to various constraints such as cost, power consumption, and physical space.
		As a result, the proposed ZF algorithm for RIS-empowered wireless communication has a shorter running time than the DC algorithm in practical implementation. 
	 On the other hand, the proposed DC algorithm has a higher computational complexity than the ZF algorithm, in order to obtain a high-quality solution.
	
	\textbf{Convergence Analysis:}
	Based on \cite{tao1997convex}, we provide the convergence property of the proposed DC algorithm for problem $\mathscr{P}_{\text{DC1}}$, which is given in Proposition 1.
	\begin{proposition}\label{prop:convergence}

		Let $f_1=\sum_{l  \in \mathcal{L}} {\beta}_l P_{\mathrm{RIS}}(N_l)+\rho\big(\text{Tr}(\bm{\hat{\Theta}})-\|\bm{\hat{\Theta}}\|_2\big)+\Delta_{\mathcal{C}}(\bm{\hat{\Theta}})$.
		The sequence $\{(\bm{\hat{\Theta}}^t)\}$ by iteratively solving problem (\ref{t-DC_OP_The}) for problem $\mathscr{P}_{\text{DC1}}$ has the following properties:
		\begin{enumerate}[1)]
			\item \label{property1} With any initial point, the sequence $\{(\bm{\hat{\Theta}}^t)\}$ converges to a critical point of $f_1$, where $\{f_1^t\}$ is strictly decreasing and convergent.
			\item \label{property2} For any $t=0,1,\cdots$, we have
			\begin{align}
				\text{Avg}\Big(\|\bm{\hat{\Theta}}^{t}-\bm{\hat{\Theta}}^{t+1}\|_F^2\Big)&\leq \frac{f_1^{0}-f_1^{\star}}{\alpha(t+1)},
			\end{align}
			where $f_1^{\star}$ is the global minimum of $f_1$ and $\text{Avg}\Big(\|\bm{\hat{\Theta}}^{t}-\bm{\hat{\Theta}}^{t+1}\|_F^2\Big)$ represents the average of the sequence $\{\|\bm{\hat{\Theta}}^{i}-\bm{\hat{\Theta}}^{i+1}\|_F^2\}_{i=0}^{t}$.
		\end{enumerate}   
	\end{proposition}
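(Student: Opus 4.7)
The plan is to establish the convergence properties by exploiting the $\alpha$-strong convexity that was deliberately injected into both $p(\bm{\hat{\Theta}})$ and $q(\bm{\hat{\Theta}})$ via the added $\tfrac{\alpha}{2}\|\bm{\hat{\Theta}}\|_F^2$ terms. This follows the standard DC algorithm (DCA) analysis of Tao and An, so I would cite \cite{tao1997convex} as the backbone and specialize its estimates to the surrogate (\ref{t-DC_OP_The}). The only delicate point in adapting the generic analysis is that $p$ contains the indicator function $\Delta_{\mathcal{C}}(\bm{\hat{\Theta}})$, so I would work with extended-real-valued functions and subdifferentials of convex functions throughout, ensuring that every iterate remains feasible (which is automatic since $p(\bm{\hat{\Theta}}^{t+1})<\infty$ forces $\bm{\hat{\Theta}}^{t+1}\in\mathcal{C}$).

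First I would establish the key descent inequality. Let $\bm{Y}^t \in \partial q(\bm{\hat{\Theta}}^t)$ be the subgradient used in (\ref{t-DC_OP_The}). By $\alpha$-strong convexity of $p$ and the optimality of $\bm{\hat{\Theta}}^{t+1}$ for the surrogate, I obtain
\begin{equation*}
p(\bm{\hat{\Theta}}^{t+1}) - \langle \bm{\hat{\Theta}}^{t+1}, \bm{Y}^t\rangle \leq p(\bm{\hat{\Theta}}^{t}) - \langle \bm{\hat{\Theta}}^{t}, \bm{Y}^t\rangle - \tfrac{\alpha}{2}\|\bm{\hat{\Theta}}^{t+1}-\bm{\hat{\Theta}}^{t}\|_F^2.
\end{equation*}
Next, by $\alpha$-strong convexity of $q$ and the subgradient inequality,
\begin{equation*}
q(\bm{\hat{\Theta}}^{t+1}) \geq q(\bm{\hat{\Theta}}^{t}) + \langle \bm{\hat{\Theta}}^{t+1}-\bm{\hat{\Theta}}^{t}, \bm{Y}^t\rangle + \tfrac{\alpha}{2}\|\bm{\hat{\Theta}}^{t+1}-\bm{\hat{\Theta}}^{t}\|_F^2.
\end{equation*}
Adding the two yields the clean descent estimate
\begin{equation*}
f_1^{t+1} \leq f_1^{t} - \alpha\,\|\bm{\hat{\Theta}}^{t+1}-\bm{\hat{\Theta}}^{t}\|_F^2,
\end{equation*}
which is the workhorse for both claims.

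For property \ref{property1}, the descent inequality shows $\{f_1^t\}$ is monotonically non-increasing; since $f_1$ is bounded below on $\mathcal{C}$ (the objective is a sum of a non-negative power term, a Ky-Fan-type DC penalty, and the indicator), $\{f_1^t\}$ converges and $\|\bm{\hat{\Theta}}^{t+1}-\bm{\hat{\Theta}}^{t}\|_F \to 0$. I would then invoke the compactness of the feasible region (the trace and PSD constraints imply boundedness of $\bm{\hat{\Theta}}$) to extract a convergent subsequence, and pass to the limit in the optimality condition $\bm{Y}^t \in \partial q(\bm{\hat{\Theta}}^t)$ together with $\bm{0} \in \partial p(\bm{\hat{\Theta}}^{t+1}) - \bm{Y}^t$, using outer semicontinuity of the subdifferentials of the convex functions $p$ and $q$, to conclude that every limit point is a critical point, i.e.\ $\partial q(\bm{\hat{\Theta}}^{\infty}) \cap \partial p(\bm{\hat{\Theta}}^{\infty}) \neq \emptyset$. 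Strict decrease holds whenever $\bm{\hat{\Theta}}^{t+1}\neq\bm{\hat{\Theta}}^{t}$ by the descent inequality.

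For property \ref{property2}, I would telescope the descent inequality from $0$ to $t$:
\begin{equation*}
\alpha\sum_{i=0}^{t}\|\bm{\hat{\Theta}}^{i+1}-\bm{\hat{\Theta}}^{i}\|_F^2 \leq f_1^{0} - f_1^{t+1} \leq f_1^{0} - f_1^{\star},
\end{equation*}
and then divide both sides by $\alpha(t+1)$ to recover exactly the claimed average bound. The main obstacle I anticipate is \emph{not} the arithmetic but rather the careful handling of the non-smooth and extended-valued nature of $p$ (indicator function) and $q$ (spectral norm, which is not differentiable at matrices with repeated largest singular values); I would therefore be explicit that $\bm{Y}^t$ is \emph{any} element of $\partial\|\bm{\hat{\Theta}}^t\|_2$ augmented by $\alpha\bm{\hat{\Theta}}^t$, so that the surrogate (\ref{t-DC_OP_The}) remains a strongly convex SDP with a well-defined unique minimizer at every iteration, justifying each step above.
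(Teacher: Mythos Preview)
Your proposal is correct and reaches the same workhorse descent inequality $f_1^{t+1}\le f_1^{t}-\alpha\|\bm{\hat{\Theta}}^{t+1}-\bm{\hat{\Theta}}^{t}\|_F^2$ as the paper, after which both properties follow identically by telescoping and a limit-point argument. The only substantive difference is the route to that inequality: the paper's proof passes through Fenchel duality, using the equalities $\langle\bm{\hat{\Theta}}^{t},\bm{Y}^{t}\rangle=q(\bm{\hat{\Theta}}^{t})+q^{*}(\bm{Y}^{t})$ and $\langle\bm{\hat{\Theta}}^{t+1},\bm{Y}^{t}\rangle=p(\bm{\hat{\Theta}}^{t+1})+p^{*}(\bm{Y}^{t})$ together with the dual objective $f_1^{*}=q^{*}-p^{*}$, whereas you stay purely primal, combining the strong-convexity optimality gap of the surrogate (for $p$) with the strong-convexity subgradient inequality (for $q$) directly. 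Your route is more elementary and avoids the conjugate machinery; the paper's route buys, as a by-product, the intermediate sandwich $f_1^{t}\ge (f_1^{*})^{t}+\tfrac{\alpha}{2}\|\bm{\hat{\Theta}}^{t+1}-\bm{\hat{\Theta}}^{t}\|_F^{2}\ge f_1^{t+1}+\alpha\|\bm{\hat{\Theta}}^{t+1}-\bm{\hat{\Theta}}^{t}\|_F^{2}$, which also shows that the primal--dual gap is controlled along the iterations.
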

	\begin{proof}
		Please refer to Appendix \ref{appd:convergence} for details.
	\end{proof}
	Note that if there is a unique solution to each of the coordinate-wise minimization subproblems, then any limit point of the sequence of solutions is a stationary point \cite{bertsekas1997nonlinear}.
	Besides, the suboptimal AO algorithms have been shown to achieve almost the same performance as the globally optimal algorithm in \cite{bho2022arxiv}.
	Such properties reveal the efficiency of our proposed AO algorithm.
	
	\section{Alternating Framework for User Admission Control Problem}\label{Sec_UAC}
	The AO framework proposed in Section \ref{Sec_NPM} provides an efficient way to minimize the network power consumption.
	As the network state deteriorates, e.g., higher data rates of devices or worse channel conditions, the infeasibility issue may occur.
	As the network power minimization problem becomes infeasible, user admission control is required to enable the cellular network \cite{evaggelia2008ac, shi2016ac}. 
	In this section, we specify the algorithm proposed to solve (\ref{user_admission_control}).
	
	For notational ease, we define $\bm{{h}}_{k}^{\sf{H}} (\mathcal{L},\{ \bm \Theta_l \})\in \mathbb{C}^{1 \times M}$ as 
	\begin{eqnarray} \label{Eq_CompH_L}
		\bm{{h}}_{k}^{\sf{H}} (\mathcal{L},\{ \bm \Theta_l \}) = \sum_{l \in \mathcal{L}} \bm{h}_{l,k}^{\sf{H}} \bm{\Theta}_{l} \bm{T}_{l} + \bm h_{d,k}^{\sf{H}},\ \forall\  k \in \mathcal{S}. 
	\end{eqnarray}
	To facilitate the problem transformation, we rewrite the SINR constraint (\ref{P_SINR_s}) as follows
	\begin{eqnarray}\label{SINR_udc_c2}
		\gamma_k \left({\sum_{j \neq k} \left|\bm{{h}}_{k}^{\sf{H}}\big(\mathcal{L},\{ \bm \Theta_l \}\big) \bm{w}_j \right|^2 + \sigma_k^2} \right)\leq \nonumber\\
		{\left|\bm{{h}}_{k}^{\sf{H}}\big(\mathcal{L},\{ \bm \Theta_l \}\big) \bm{w}_k \right|^2}, \forall \, k \in \mathcal{S}.
	\end{eqnarray}
	To derive a feasible solution to problem (\ref{user_admission_control}), we add auxiliary variables $\{v_k\},v_k\in \mathbb{R}^+,k\in \mathcal{K}$ to the SINR constraint (\ref{user_admission_control}b) as
	\begin{align}
		\mathcal{C}_4({\mathcal{L},\{\bm{w}_k\},\{\bm{\Theta}_l\}},\{v_k\})\hspace{-1.2mm}: \hspace{-1.2mm}&\gamma_k^{\textrm{th}}\hspace{-1.2mm} \left({\sum_{j \neq k} \left|\bm{{h}}_{k}^{\sf{H}}\big(\mathcal{L},\{ \bm \Theta_l \}\big) \bm{w}_j \right|^2+ \sigma_k^2} \right)\hspace{-1.2mm}\leq \nonumber\\
		&{\left|\bm{{h}}_{k}^{\sf{H}}\big(\mathcal{L},\{ \bm \Theta_l \}\big) \bm{w}_k \right|^2}+v_k, \forall \, k \in \mathcal{K},
	\end{align}
	where $v_k$ represents the gap of received signal power for user $k$ to satisfy its desired SINR requirement \cite{zhao2015uac}. 
	With the help of auxiliary variables $\{v_k\}$, we could obtain the solution of problem (9) by firstly solving the following optimization problem 
	\begin{subequations} \label{udc_1}
		\begin{align}
			\mathop{\text{minimize}}_{ \{ \bm{w}_k \}, \{ \bm \Theta_l \}, \{v_k\}} \label{udc_c1}\hspace{3mm}  & P(\mathcal{L}, \{ \bm{w}_k \}) + \delta\sum_{k\in \mathcal{K}}v_k \\
			\text{subject to}\ \ 
			\hspace{3mm}  & \mathcal{C}_4({\mathcal{L},\{\bm{w}_k\},\{\bm{\Theta}_l\}},\{v_k\}),(\ref{OP_Original}c), (\ref{user_admission_control}d),
		\end{align}
	\end{subequations}
	where $\delta$ is a positive large constant variable to guarantee the optimization problem (\ref{udc_1}) always be feasible \cite{evaggelia2008ac}.
	To solve the problem (\ref{udc_1}), we follow Section III and propose an AO framework to alternatively optimize the beamforming vector and the selected user set, and the phase shifts of RISs.	

	\subsection{Joint Optimization for Beamforming  and User Admission}
	In this subsection, for given phase shifts of RISs, we optimize the transmit beamforming vectors $ \{ \bm{w}_k \}$ at the BS and the auxiliary variables $\{v_k\}$ for given RIS phase-shift matrices. The problem is formulated as
	\begin{subequations} \label{udc_subpro1}
		\begin{align}
			\mathop{\text{minimize}}_{ \{ \bm{w}_k \}, \{v_k\}} \label{udc_subpro1_c1}\hspace{3mm} \ \  & \sum_{k \in \mathcal{K}} \frac{1}{\eta} \| \bm{w}_k \|^2 + \sum_{l \in \mathcal{L}}P_{\mathrm{RIS}}(N_l) + \delta\sum_{k\in \mathcal{K}}v_k \\
			\text{subject to}\ \ 
			\label{udc_subpro1_c2} \hspace{3mm}  & \mathcal{C}_4(\{\bm{w}_k\},\{v_k\}), (\ref{OP_Original}c).
		\end{align}
	\end{subequations}
	We define $\bm{W}_k=\bm{w}_k\bm{w}_k^{\sf H}\in \mathbb{C}^{M\times M}$ with $\bm{W}_k\succcurlyeq 0$ and $\text{rank}(\bm{W}_k)=1$. 
	We define $\bm{H}_k=\bm{{h}}_{k}\bm{{h}}_{k}^{\sf{H}}\in \mathbb{C}^{M\times M},\ \forall k\in \mathcal{K}$, and constraint (\ref{udc_subpro1_c2}) can be rewritten as
	\begin{eqnarray}
		\mathcal{C}_5(\{\bm{W}_k\},\{v_k\}): 	\gamma_k^{\textrm{th}} \left({\sum_{j \neq k} \text{Tr}\Big(\bm{H}_k\bm{W}_j\Big)+ \sigma_k^2} \right)\leq \nonumber\\ \text{Tr}\Big(\bm{H}_k\bm{W}_k\Big)+v_k, \forall \, k \in \mathcal{K}.
	\end{eqnarray}
	As $ \| \bm{w}_k \|^2=\text{Tr}(\bm{W}_k),\ \forall k\in \mathcal{K}$, problem (\ref{udc_subpro1}) is further reformulated as
	\begin{subequations} \label{udc_subpro2}
		\begin{align}
			\mathop{\text{minimize}}_{ \{ \bm{W}_k \},\{v_k\}} \label{udc_subpro2_c1}\hspace{3mm}\ \   & \sum_{k \in \mathcal{K}} \frac{1}{\eta} \text{Tr}(\bm{W}_k) + \sum_{l \in \mathcal{L}}P_{\mathrm{RIS}}(N_l) + \delta\sum_{k\in \mathcal{K}}v_k \\
			\text{subject to}\ \ 
			\label{udc_subpro2_c2} \hspace{3mm}  & \mathcal{C}_5(\{\bm{W}_k\},\{v_k\}),\\
			&\sum_{k\in \mathcal{K}}  \text{Tr}(\bm{W}_k) \le P_{\mathrm{max}},\\
			&\bm W_k \succeq 0, \forall k\in \mathcal{K},\\
			&\text{rank}(\bm{W}_k)=1,\forall k\in \mathcal{K}.
		\end{align}
	\end{subequations}
	Due to the rank-one constraint in (\ref{udc_subpro2}e).
	In Section \ref{Sec_UAC}-C, following Section III, we will specify the DC programming method proposed to solve (\ref{udc_subpro2}).

	\subsection{Phase Shifts Optimization}
	Given beamforming vectors $\{\bm{w}_k\}$ and auxiliary variables $\{v_k\}$, the resulting optimization problem for the phase shifts of RISs can be converted as
	\begin{subequations} \label{udc_RIS}
		\begin{align}
			\mathop{\text{find}} \label{udc_RIS_c1}\hspace{3mm} \ \  &  \{\bm{\Theta}_l\}\\
			\text{subject to}\ \ 
			\label{udc_RIS_c2} \hspace{3mm}  & \mathcal{C}_4(\{\bm{\Theta}_l\}),  (\ref{user_admission_control}d)
		\end{align}
	\end{subequations}	
	Note that ``find" in problem (\ref{udc_RIS}) is used only to check the feasibility of the problem. 
		When we can ``find'' a feasible solution to problem (\ref{udc_RIS}), the AO algorithm stops and the current number of admitted users is the largest number of users that the cellular network can serve.

	Let $\bm{\theta}_l=[\theta_{l,1},\theta_{l,2},\ldots,\theta_{l,N_l}]^{\sf H} \in \mathbb{C}^{N_l \times 1},\ \forall \, l \in \mathcal{L}$.
	Similar to Section \ref{sec:3_b}, we have $(\bm h_{l,k}^{\sf{H}} \bm \Theta_l \bm T_l +\bm h_{d,k}^{\sf{H}})\bm{w}_j = \bm \theta_l^{\sf{H}}\bm a_{k,j}(l)+b_{k,j} , \forall \, k, j \in \mathcal{K}, \forall \, l \in \mathcal{L}$. 
	Then the SINR constraint (\ref{udc_RIS_c2}) can be given as 
	\begin{eqnarray} \label{SINR2_udc_c2}
		\hspace{-16mm} \gamma_k^{\textrm{th}} \left (\sum_{j \neq k} \Big| \sum_{l \in \mathcal{L}} \bm \theta_l^{\sf{H}}\bm a_{k,j}(l) + b_{k,j} \Big|^2 + \sigma_k^2 \right)  \leq \nonumber\\
		 \left| \sum_{l \in \mathcal{L}} \bm \theta_l^{\sf{H}}\bm a_{k,k}(l) + b_{k,k} \right|^2+v_k, \forall \, k\in \mathcal{K}.
	\end{eqnarray}
	
	By defining $\bm{\bar{\theta}}=[\bm{\theta}_1^{\sf{H}},\bm{\theta}_2^{\sf{H}},\dots,\bm{\theta}_L^{\sf{H}}]^{\sf{H}} \in \mathbb{C}^{{{N}_{\text{total}}}\times 1}$ and $\bm{\check{\theta}}=\left[\begin{array}{l}\bm{\bar{\theta}} \\ {1}
		\end{array}
		\right]\in \mathbb{C}^{({{N}_{\text{total}}}+1) \times 1}$,
	the non-convex constraint (\ref{SINR2_udc_c2}) is rewritten as
	\begin{eqnarray}
		\mathcal{C}_6(\mathcal{L},\bm{\check{\theta}}): \hspace{-6mm}  \gamma_k^{\textrm{th}} \left(\sum_{j \neq k} \left( \bm{\check{\theta}}^{\sf{H}}\bm R_{k,j} \bm{\check{\theta}} + |b_{k,j}|^2\right)  + \sigma_k^2 \right) \leq \nonumber\\ \bm{\check{\theta}}^{\sf{H}}\bm R_{k,k} \bm{\check{\theta}} + |b_{k,k}|^2+v_k, \forall \, k \in \mathcal{K}.
	\end{eqnarray}
	Then problem (\ref{udc_RIS}) is reformulated as
	\begin{subequations} \label{udc_RIS2}
		\begin{align}
			\mathop{\text{find}} \label{udc_RIS2_c1}\hspace{3mm} \ \  & \bm{\check{\theta}} \\
			\text{subject to}\ \ 
			\label{udc_RIS2_c2} \hspace{3mm}  & \mathcal{C}_6(\mathcal{L},\bm{\check{\theta}}),  | \bm{\check{\theta}}_{n}| = 1,\ \forall\ n\in \mathcal{N}_{\text{total}}^*,
		\end{align}
	\end{subequations}
	where the index set is defined as $\mathcal{N}_{\text{total}}^*=\{1,2,\cdots,N_{\text{total}}+1\}$.
	Next, we define a PSD matrix $\bm{\check{\Theta}}=\bm{\check{\theta}}\bm{\check{\theta}}^{\sf H}\in \mathbb{C}^{({{N}_{\text{total}}}+1) \times ({{N}_{\text{total}}}+1)}$ with $\bm{\check{\Theta}} \succeq 0$ and ${\rm {rank}}(\bm{\check{\Theta}})=1$.
	Hence, we have $\mathrm{Tr}(\bm R_{k,j} \bm{\check{\Theta}}) = {\bm{\check{\theta}}}^{\sf{H}}\bm R_{k,j} {\bm{\check{\theta}}}$.
	Then problem (\ref{udc_RIS2}) becomes
	\begin{subequations} \label{udc_RIS3}
		\begin{align}
			\mathop{\text{find}} \label{udc_RIS3_c1}\hspace{3mm} \ \  &\bm{\check{\Theta}} \\
			\text{subject to}\ \ 
			\label{udc_RIS3_c2} \hspace{3mm}  & \mathcal{C}_7(\mathcal{L},\bm{\check{\Theta}}), \\
			\label{udc_RIS3_c4} \hspace{3mm}  & \bm{\check{\Theta}} \succeq 0, | \bm{\check{\Theta}}_{n,n}| = 1,\ \forall\ n\in \mathcal{N}_{\text{total}}^*, \\
			&\text{rank}(\bm{\check{\Theta}})=1,
		\end{align}
	\end{subequations}
	where the constraint (\ref{udc_RIS3_c2}) is given by
	\begin{align}
		\mathcal{C}_7(\mathcal{L},\bm{\check{\Theta}}):  \gamma_k^{\textrm{th}} \left(\sum_{j \neq k} \left( \text{Tr}(\bm{R}_{k,j} \bm{\check{\Theta}}) + |b_{k,j}|^2\right)  + \sigma_k^2 \right) \leq  \nonumber\\
		\text{Tr}(\bm{R}_{k,k} \bm{\check{\Theta}}) + |b_{k,k}|^2+v_k, \forall \, k \in \mathcal{K}.
	\end{align}
	Note that both problems (\ref{udc_subpro2}) and (\ref{udc_RIS3}) are non-convex rank-constrained optimization problems, which are similar to problem (\ref{OP_The_rank_one}) in Section III. 
	Therefore, we adopt the same DC algorithm to solve problems (\ref{udc_subpro2}) and (\ref{udc_RIS3}).
	
	\subsection{Alternating DC Framework}

	Similarly, problems (\ref{udc_subpro2}) and (\ref{udc_RIS3}) can be reformulated by adding the DC
	representation of the rank-one constraint into the objective function.
	We can rewrite the beamforming optimization as

	\begin{subequations} \label{udc_subpro3}
		\begin{align}
			\mathscr{P}_{\text{DC2}}:\mathop{\text{minimize}}_{ \{ \bm{W}_k \}, \{v_k\}} \label{udc_subpro3_c1}
			\ \ &\sum_{k \in \mathcal{K}} \frac{1}{\eta} \text{Tr}(\bm{W}_k) +  \sum_{l \in \mathcal{L}}P_{\mathrm{RIS}}(N_l) +\nonumber\\ &\delta\sum_{k\in \mathcal{K}}v_k +\zeta\sum_{k\in \mathcal{K}}(\text{Tr}(\bm{W}_k)-\|\bm{W}_k\|_2)\\
			\text{subject to}\ \ 
			\label{udc_subpro3_c2}   & \mathcal{C}_5(\{\bm W_k\},\{v_k\}), (\ref{udc_subpro2}c), 
		\end{align}
	\end{subequations}
	where $\zeta$ represents the penalty parameter.
	
	The phase-shift matrices optimization problem for given $\{ \bm{w}_k \}$ and $ \{v_k\}$ is
	\begin{subequations} \label{udc_RIS4}
		\begin{align}
			\mathscr{P}_{\text{DC3}}:\mathop{\text{minimize}}_{\bm{\check{\Theta}}} \label{udc_RIS4_c1}\hspace{3mm} \ \  & \text{Tr}(\bm{\check{\Theta}})-\|\bm{\check{\Theta}}\|_2 \\
			\text{subject to}\ \ 
			\label{udc_RIS4_c2} \hspace{3mm}  & \mathcal{C}_7(\mathcal{L},\bm{\check{\Theta}}), (\ref{udc_RIS3}c).
		\end{align}
	\end{subequations}

	Thus, following Section \ref{sec:3_c}, the $t$-th iteration of $\{\bm{W}_k^t\}$ can be obtained by solving the following problem
	\begin{subequations} \label{udc_subpro4}
		\begin{align}
			\mathop{\text{minimize}}_{ \{ \bm{W}_k \}, \{v_k\}} \label{udc_subpro4_c1}
			\ \ &\sum_{k \in \mathcal{K}} \frac{1}{\eta} \text{Tr}(\bm{W}_k) +  \sum_{l \in \mathcal{L}}P_{\mathrm{RIS}}(N_l) +\nonumber\\
			&\delta\sum_{k\in \mathcal{K}}v_k+\zeta\sum_{k\in \mathcal{K}}(\langle \bm{W}_k ,\bm{I}_M-\partial_{\bm{W}_k^{t-1}}\|\bm{W}_k\|_2\rangle)\\
		\text{subject to}\ \ 
		\label{udc_subpro3_c2}   & \mathcal{C}_5(\{\bm W_k\},\{v_k\}), (\ref{udc_subpro2}c), (\ref{udc_subpro2}e),
		\end{align}
	\end{subequations}
	and the $t$-th iteration $\{\bm{\check{\Theta}}^t\}$ can be optimized by
	\begin{subequations} \label{udc_RIS5}
		\begin{align}
			\mathop{\text{minimize}}_{\bm{\check{\Theta}}} \label{udc_RIS5_c1}\hspace{3mm} \ \  & \text{Tr}(\bm{\check{\Theta}})-\langle \bm{\check{\Theta}},\partial_{\bm{\check{\Theta}}^{t-1}}\|\bm{\check{\Theta}}\|_2\rangle \\
			\text{subject to}\ \ 
			\label{udc_RIS4_c2} \hspace{3mm}  & \mathcal{C}_7(\mathcal{L},\bm{\check{\Theta}}), (\ref{udc_RIS3}c).
		\end{align}
	\end{subequations}
	By solving the convex optimization problem (\ref{udc_subpro4}) and (\ref{udc_RIS5}), we can find the solution to problem (\ref{udc_subpro3}) and (\ref{udc_RIS4}).
	The proposed algorithm for solving problems (\ref{udc_subpro3}) and (\ref{udc_RIS4}) is illustrated in Algorithm \ref{Alg2}, and then the overall algorithm for user admission control is summarized in Algorithm~\ref{Alg3}.
	\begin{algorithm}[t] 
		\caption{Proposed Algorithm for Solving Problem (53) and (54)}\label{Alg2}
		\KwIn{$\bm{\check{\Theta}}^0$ and $\epsilon>0$.}
		\For{$t=1,2,...$}{
			Given $\bm{\check{\Theta}}^0$, solve problem (53) to obtain $\{\bm{W}_k^t\}$.\\
			\For{$i=1,2,...$}{
				Select a subgradient of $\partial_{W_k^{i-1}}\|\bm{W}_k\|_2$ and solve problem (55) to obtain $\{\bm{W}_k^i\}$ and $v_k^i$.\\
				\lIf{the objective value converges}{
					\textbf{break}
				}
			}
			
			Given $\{\bm{W}_k^t\}$ and $v_k^t$, solve problem (54) for $\bm{\check{\Theta}}^{t+1}$.\\
			\For{$i=1,2,...$}{
				Select a subgradient of $\partial_{\bm{\check{\Theta}}^{i-1}}\|\bm{\check{\Theta}}\|_2$ and solve problem (56) to obtain ${\bm{\check{\Theta}}^i}$.\\
				\lIf{the objective value converges}{
					\textbf{break}
				}
			}
			Obtain $\bm{\check{\theta}}^{t+1}$ via Cholesky decomposition.
			
			\lIf{the decrease of the objective value of problem (53) is below $\epsilon$ or problem (54) becomes infeasible}{
				\textbf{break}
			}
		}
		
	\end{algorithm}	
	\begin{algorithm}[!ht] 
		\caption{Overall algorithm for user admission control}\label{Alg3}
		\KwIn{Initialize set $\mathcal{S}=\{1,...,K\}$.}
		Adopt Algorithm 2 to solve problem (\ref{udc_subpro3}) and (\ref{udc_RIS4}), and get the solution $\bm{w}_k$ and $\bm{\check{\Theta}}$ for each user $k$ in $\mathcal{S}$\;
		\For{each user $k\in\mathcal{S}$}{
			\If{the QoS constraint (\ref{SINR_udc_c2}) is not satisfied}{
				Remove user $k^*$ with the largest $v_k$. Go to step 1.\\

			}
		}
	\end{algorithm}	

	\subsection{Computational Complexity and Convergence Analysis}
	In this subsection, we analyze the proposed DC algorithm for the user admission control problem.
	Similar to Subsection \ref{sec:analysis_npm}, we demonstrate the theory performance of Algorithm 3 for the user admission control problem as follows.
	
	\textbf{Computational Complexity: }
	We solve the user admission control problem by two steps: 1) jointly optimizing the beamforming vector of BS and user admitted set in (\ref{udc_subpro1}); 2) optimizing the phase shifts of RISs in (\ref{udc_RIS}).
	Both the two problems are solved by the proposed DC algorithm, where the DC formulations are given in $\mathscr{P}_{\text{DC2}}$ and $\mathscr{P}_{\text{DC3}}$.
	The SDP problems of $\mathscr{P}_{\text{DC2}}$ and $\mathscr{P}_{\text{DC3}}$ are (\ref{udc_subpro4}) and (\ref{udc_RIS5}), and the respective computational complexities are $\mathcal{O}(T(M^2+K)^{3.5})$ and $\mathcal{O}(T(N_{\text{total}}^2+K)^{3.5})$.
	
	\textbf{Convergence Analysis:}
	Without loss of generality, the convergence of the proposed DC algorithm for problem $\mathscr{P}_{\text{DC2}}$ and $\mathscr{P}_{\text{DC3}}$ can be proved similarly to Proposition 1.
	
	\begin{figure*}[ht] \centering
		\includegraphics[scale = 0.32]{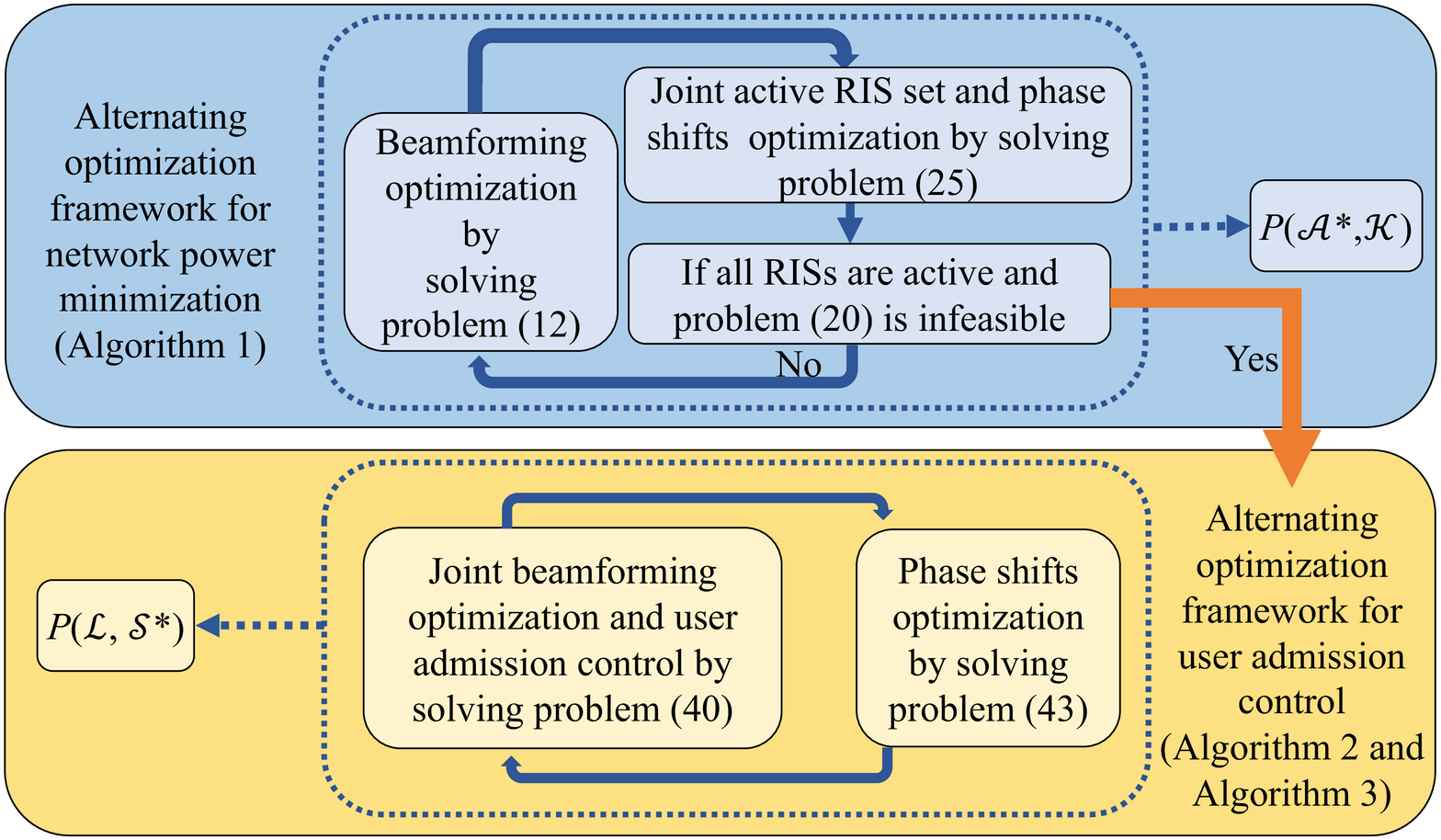}
		\caption{Optimization framework of network power minimization and user admission control.}
		\label{fig:framework}
	\end{figure*}
	\subsection{Unified Framework for Network Power Minimization and User Admission Control}

	Based on the above discussion, we illustrate the proposed unified AO framework for both network power minimization and user admission control in Fig.~\ref{fig:framework}. 
		We first attempt to minimize the network power by optimizing the phase shifts, the active set of RISs, and beamforming vectors at the BS (Algorithm 1).
		When the solution becomes infeasible, we convert to solve the user admission control problem, and employ the proposed AO algorithm to optimize the phase shifts of RISs, the beamforming vectors at the BS, and the active user set (Algorithm 3).
		Both of the two schemes proposed in Sections III and IV are aimed to minimize the network power consumption, i.e., achieving a lower power consumption, but in different network conditions.
		Therefore, the proposed schemes in Sections III and IV are complementary and the proposed unified optimization framework can efficiently minimize the power consumption by combining the two schemes, thereby improving the energy efficiency of green wireless networks.

	\section{Numerical Results}\label{Sec_NumRes}

	In this section, we demonstrate the effectiveness of the proposed algorithms in a multi-RIS-assisted MISO cellular network via numerical results. 

	We consider a wireless network with a uniform linear array (ULA) at the BS and a uniform rectangular array (URA) at each RIS. 
	The number of transmit antennas at the BS is set as $10$.
	The number of RISs is set as $L=3$.
	The location of the BS is set as $(0,0,10)$, and the locations of RISs are set as $(0,30,10)$, $(30,70,10)$, and $(70,0,10)$, respectively.
	The users are assumed to be randomly located in a circle centered at $(70,40,0)$ with radius of 15 meters. 
	The detailed locations of the BS, RISs and users are illustrated in Fig.~\ref{fig:x-y-model}.
	We follow the simulation setting as in \cite{huang2019reconfigurable,hua2021greenris} with the path loss parameter with $\alpha_{BU}=3.67$, $\alpha_{BR}=2.2$, and $\alpha_{RU}=2$ for BS-users link, the BS-RIS link, and RIS-user link, respectively \cite{3GPP}.

\begin{figure}
	\centering
	\begin{minipage}[t]{\linewidth}
		\includegraphics[scale = 0.28]{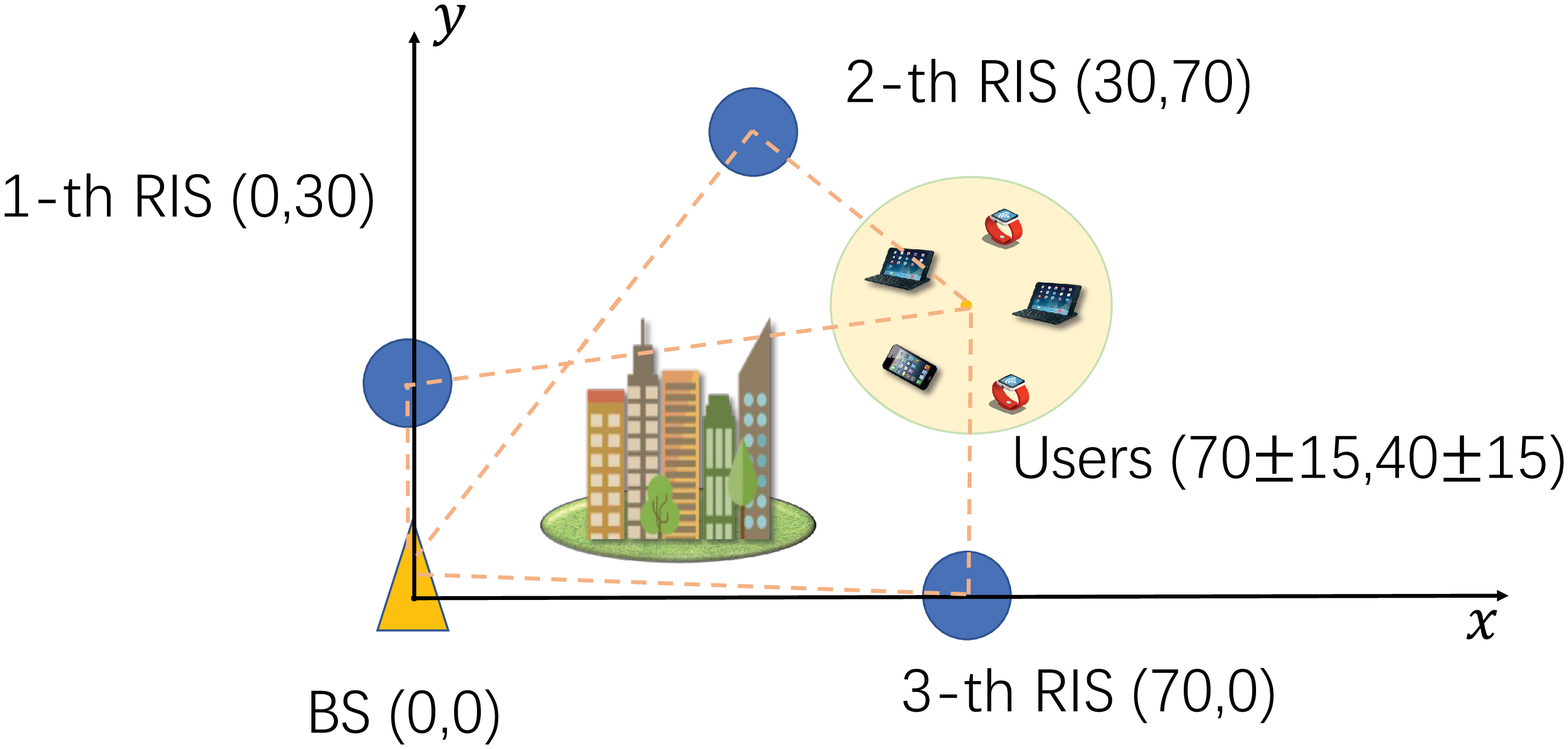}
		\caption{Horizontal location of the BS, RISs, and users.}
		\label{fig:x-y-model}
	\end{minipage}\\

	\begin{minipage}[t]{\linewidth}
		\centering
		\includegraphics[scale = 0.5]{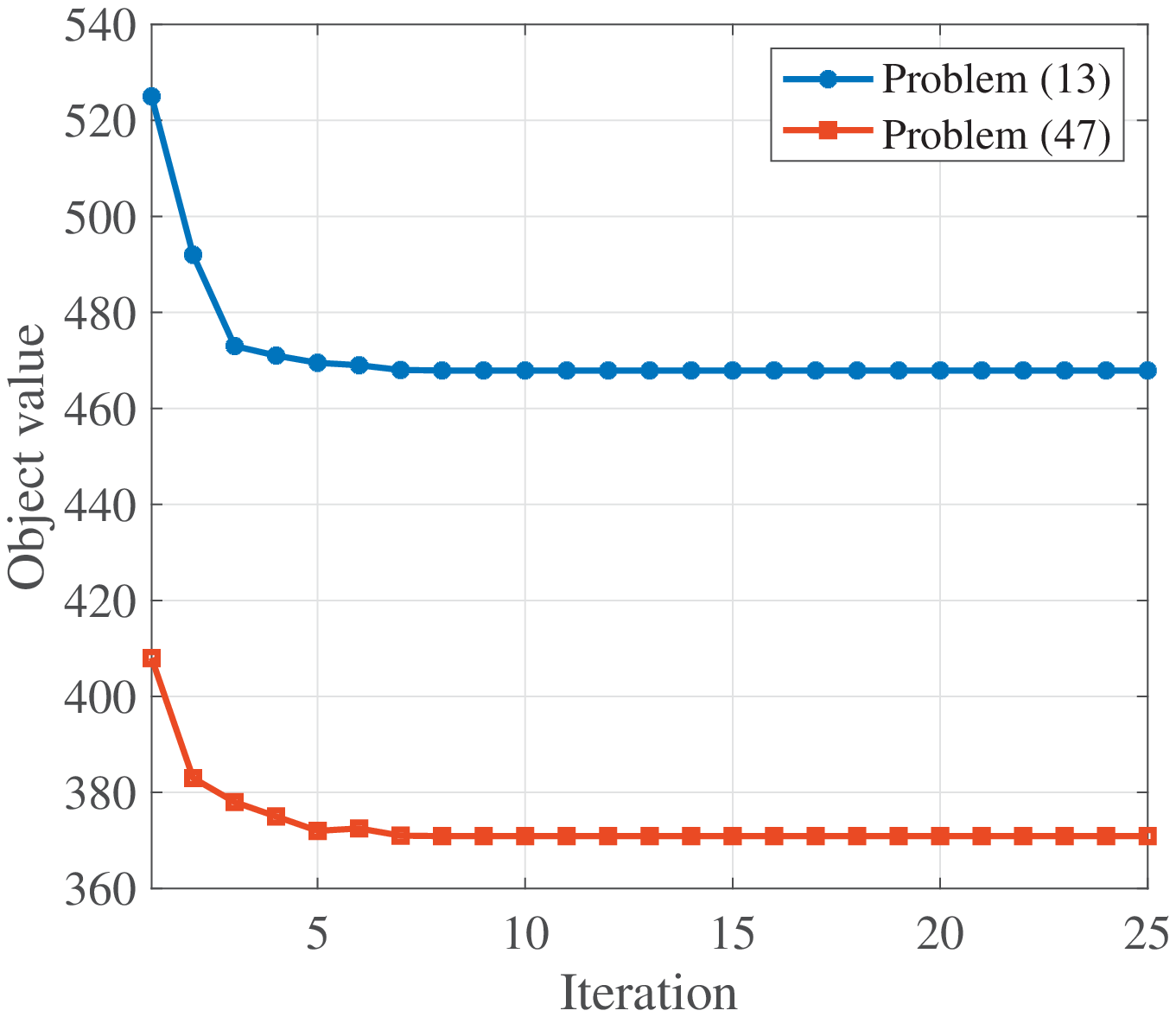}
		\caption{Convergence of the proposed DC algorithm.}
		\label{fig:npm_conv}
	\end{minipage}
\end{figure}
	
	In our simulations, we assume the SINR threshold of all users is the same, which is set as $1$ dB.
	Besides, we set $\sigma_k^2=-40$ dBm, $\eta=0.6$, $P_{\max}=1000$ milliwatt (mW) and $P_{\mathrm{RE}}=10$ mW \cite{huang2019reconfigurable}. 
	All results are averaged over 1000 Monte Carlo simulations.

	Firstly, we study the convergence performance of the proposed DC algorithm for the network power minimization and user admission control problem, respectively.
	In Fig.~\ref{fig:npm_conv}, we depict the objective value decrease (\ref{udc_subpro1}) as the number of iteration increases.
	We observe that both the objective values for the network power minimization problem and the user admission control problem decrease monotonically.
	In addition, the objective value decreases fast in the first several iterations, which demonstrates the convergence speed performance of the proposed DC algorithm.

	\subsection{Network Power Minimization}
	In this subsection, we validate the effectiveness of our proposed DC algorithm for the network power minimization problem.
	We fix the number of reflecting elements at each RIS to $N_l=20$ and the number of users to $K=6$.
	We consider the following three schemes as baselines:
	\begin{itemize}
		\item \textbf{SDR}: the BS leverages the SDR algorithm to solve the power minimization problem \cite{sun2020green}.
		\item \textbf{Successive Convex Approximation (SCA)}: the BS exploits the SCA algorithm to minimize the network power consumption \cite{yang2021distributed}.
		\item \textbf{ZF}: the BS employs the ZF algorithm proposed in Sec. \ref{sec:ZF}  to alternatively optimize the BS beamforming and RISs phase shifts with an exhausted search over all combinations of the active RISs sets.
		\item \textbf{Exhaustive search}: all possible combinations of active RISs are checked. Note that the complexity of the exhaustive search grows exponentially with the number of RISs.
		\item \textbf{All active}: all RISs in $\mathcal{L}$ are active and only the transmit power of the BS is minimized. 
	\end{itemize}
\begin{table}[!t]
	\centering
	\begin{minipage}{\linewidth}
		\caption{The average number of active RISs with different algorithms}
		\centering
		\label{table:avg_num} 
		\resizebox{0.9\linewidth}{!}{
		\begin{tabular}{c |c |c |c |c |c}  
			\hline
			& & & \\[-10pt]  
			Target SINR [dB]&0.5&1&1.5&2&2.5 \\ 
			\hline  
			& & & \\[-10pt]  
			All Active&3&3&3&3&3 \\  
			\hline
			& & & \\[-10pt]  
			Proposed DC&0.5&0.8&1&1.3&1.8 \\
			\hline
			& & & \\[-10pt]  
			SDR&0.5&0.9&1.1&1.5&2 \\
			\hline
			& & & \\[-10pt]  
			SCA&0.8&1.2&1.5&1.8&2.4 \\
			\hline 
			& & & \\[-10pt]  
			ZF&0.9&1.4&1.8&2.2&2.6 \\
			\hline
			& & & \\[-10pt]  
			Exhaustive Search&0.4&1&1.4&2&1.8 \\
			\hline
		\end{tabular}}
	\end{minipage}\\
\vspace{5mm}
	\begin{minipage}{\linewidth}
		\centering
		\caption{The total RIS power consumption with different algorithms (mW)}  
		\label{table:pow1}  
		\resizebox{\linewidth}{!}{
		\begin{tabular}{c| c| c| c| c| c}  
			\hline
			& & & \\[-10pt]  
			Target SINR [dB]&0.5&1&1.5&2&2.5 \\ 
			\hline  
			& & & \\[-10pt]  
			All Active &135&135&135&135&135 \\  
			\hline  
			& & & \\[-10pt]  
			Proposed DC&22.5&36&45&58.5&81 \\
			\hline  
			& & & \\[-10pt]  
			SDR&22.5&40.5&49.5&67.5&90 \\
			\hline
			& & & \\[-10pt]  
			SCA&36&54&67.5&81&108 \\
			\hline 
			& & & \\[-10pt]  
			ZF&40.5&63&81&99&117 \\
			\hline  
			& & & \\[-10pt]  
			Exhaustive Search&18&45&63&90&81 \\
			\hline
		\end{tabular}}
	\end{minipage} 
\end{table}
	
	We first study the impact of the target SINR on the proposed RIS-assisted green wireless network and the proposed algorithms. To this end, we vary the target SINR and show the average number of active RISs, the total RIS power consumption and the total transmit power consumption in Table \ref{table:avg_num}, Table \ref{table:pow1} and Table \ref{table:pow2}, respectively. 
	Note that some of the average numbers of active RIS in Table \ref{table:avg_num} are non-integers since we run the algorithms over 1000 Monte Carlo simulations, which can help to clearly show the performance of different algorithms.
	We have the following observations.
	
	First, as shown in Table \ref{table:pow2}, the scheme with all RISs being active achieves the maximum transmit power consumption at the BS since the network performance, e.g., QoS, can be improved by employing more RISs. 
	However, the active RISs incur large RIS power consumption, as shown in Table \ref{table:pow1}. Therefore, it suffers from the highest network power consumption, which is power inefficient.
	Second, the ZF, SCA, SDR, and the proposed DC algorithms significantly reduce the network power consumption compared to the scheme with all RISs being active.
	In particular, the proposed DC outperforms the ZF, SCA, and SDR algorithms, and it achieves a comparable result with the optimal exhaustive method.
	Therefore, we obtain the following conclusions:
	1) the proposed DC, ZF, SCA, and SDR algorithms reduce the RIS power consumption (as shown in Table \ref{table:pow1}) by switching off some RISs (as shown in Table \ref{table:avg_num}). 
	2) the proposed DC, ZF, SCA, and SDR schemes require more transmit power at the BS than the scheme with all RISs being active (as shown in Table \ref{table:pow2}). 
	This is mainly because with some RISs being switched off, more power is needed at the BS to guarantee the QoS requirements. 
	3) the SDR algorithm achieves higher performance than the SCA algorithm since it convexifies the non-convex quadratic constraints as a linear constraints via dropping the rank-one constraint in the lifting problem.
	The proposed DC outperforms SDR in terms of the RIS power consumption, the transmit power consumption and the network power consumption, due to its superiority in capturing the rank-one constraint.
	
	\begin{figure}
		\centering
		\hspace{-1mm}
		\begin{minipage}{\linewidth}
			\captionof{table}{The average total transmit power consumption with different algorithms (mW)}
			\label{table:pow2}  
			\begin{tabular}{c| c| c| c| c| c}  
				\hline  
				& & & \\[-10pt]  
				Target SINR [dB]&0.5&1&1.5&2&2.5 \\ 
				\hline  
				& & & \\[-10pt]  
				All Active&199&335&378&464&629 \\  
				\hline 
				& & & \\[-10pt]  
				Proposed DC&125&198&264&353&472 \\
				\hline 
				& & & \\[-10pt]  
				SDR&140&263&318&372&569 \\
				\hline
				& & & \\[-10pt]  
				SCA&145&283&352&405&603 \\
				\hline 
				& & & \\[-10pt]  
				ZF&162&294&365&420&611 \\
				\hline
				& & & \\[-10pt]  
				Exhaustive Search&121&193&244&325&437 \\
				\hline
			\end{tabular}
		\end{minipage}

	\end{figure}
	\begin{figure}
		\centering
		\begin{minipage}{\linewidth}
			\centering
			\includegraphics[scale = 0.4]{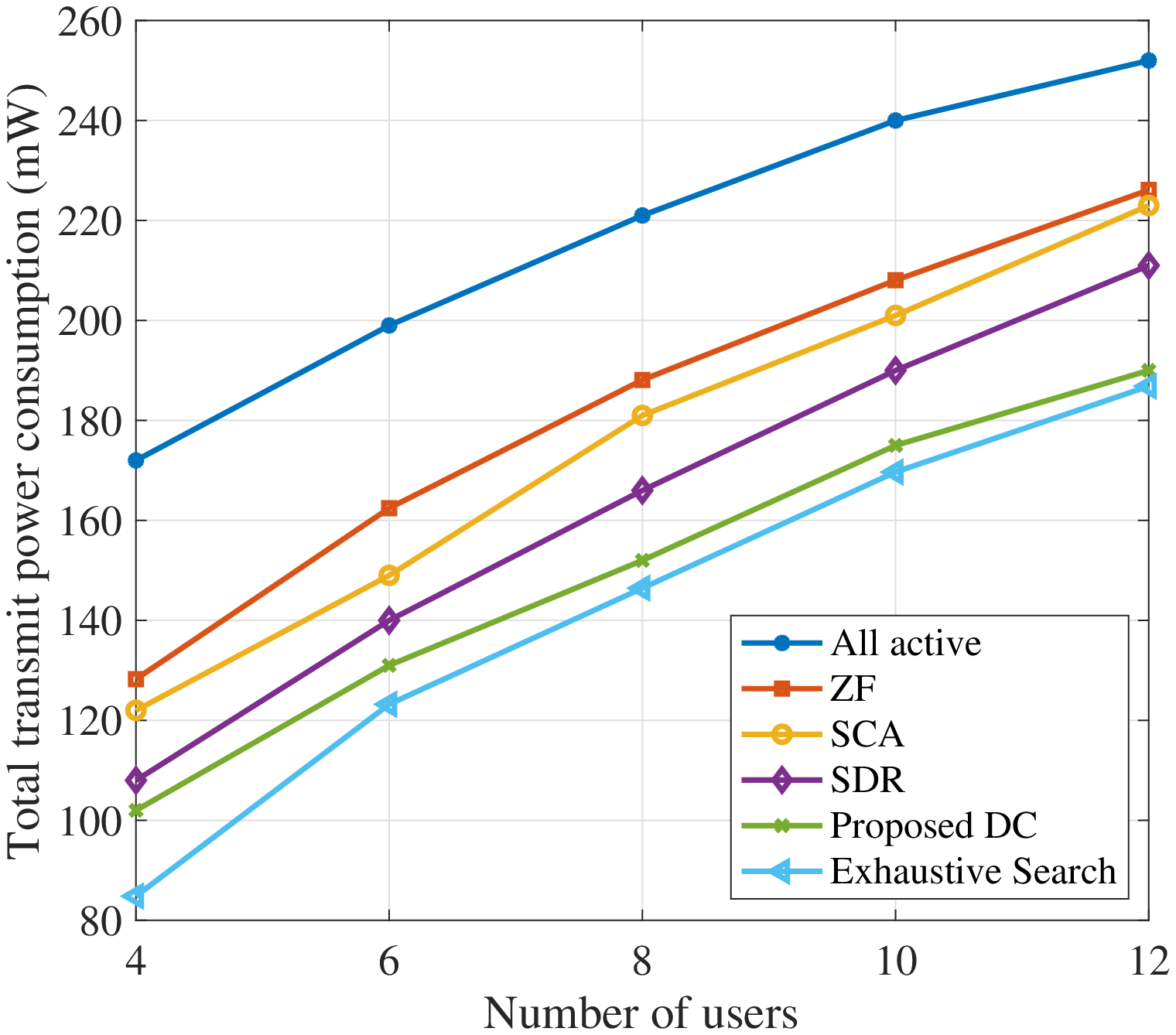}
			\caption{Total transmit power consumption versus different number of users $K$.}
			\label{fig:npm_diff_k}
		\end{minipage}
	\end{figure}
	
	Fig.~\ref{fig:npm_diff_k} shows the total transmit power consumption versus the number of RISs. 
	From Fig.~\ref{fig:npm_diff_k}, we can observe that the total transmit power consumption monotonically increases with the number of users. 
	A larger number of users can lead to higher power consumption due to the SINR constraints for each user.
	However, the proposed DC algorithm always outperforms the ZF, SCA and SDR algorithms.
	The reason mainly comes from the sufficient exploitation of matrix lifting technique and rank-one constraints, therefore achieving superb performance for network power minimization.
	Note that the ZF algorithm has much lower computational complexity than all other algorithms.
	Therefore, in practice, both the different two types of optimization algorithms (namely, ZF and DC) can be applied according to different requirements.
	For instance, the proposed ZF is more suitable for computation-sensitive scenarios while the proposed DC is more suitable for computation-insensitive scenarios to enhance the performance.
	\begin{figure}[t]
		\centering
		\includegraphics[scale = 0.4]{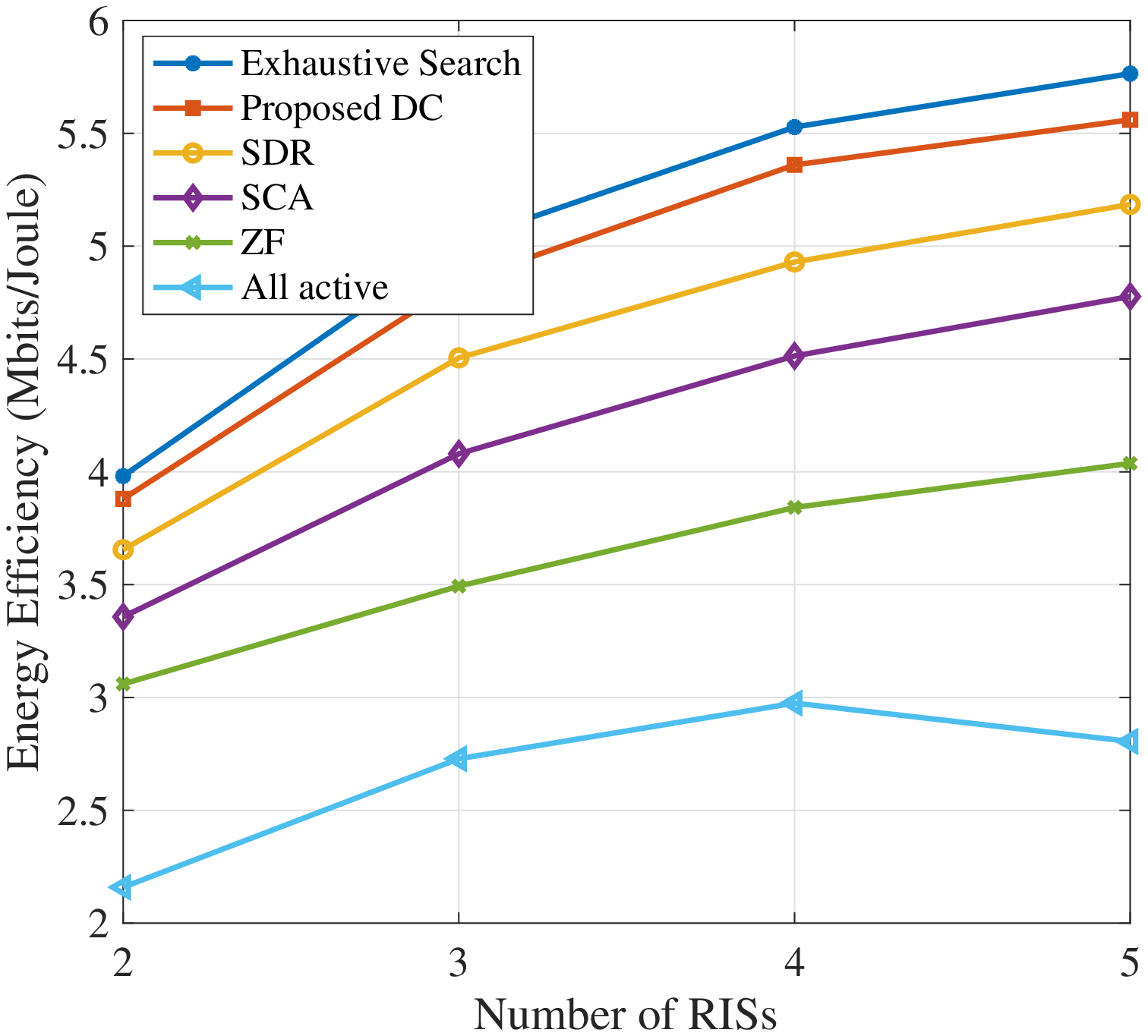}
		\caption{Energy efficiency versus different number of RISs $L$.}
		\label{fig:ee_diff_k}
	\end{figure}
	Moreover, we illustrate the energy efficiency performance of the baseline algorithms in Fig.~\ref{fig:ee_diff_k}.
	The energy efficiency is defined as
	\begin{align}
		\frac{\sum_{k\in\mathcal{K}}B\log_2(1+\gamma_k)}{{P}_{\text{total}}(\mathcal{A}, \{ \bm{w}_k \})},
	\end{align}
	where $B = 1$ MHz is the bandwidth of each user.
	Fig.~\ref{fig:ee_diff_k} depicts the energy efficiency versus the number of RISs with different baseline algorithms.
	The locations of RISs are randomly generated around the obstacle in Fig.~\ref{fig:x-y-model}.
	From Fig.~\ref{fig:ee_diff_k}, we observe that the energy efficiency of most baseline algorithms increases with the number of RISs.
	This is because, with more RISs, the BS can determine a more efficient RIS set, i.e., an active RIS set with lower network power consumption, leading to higher energy efficiency. Moreover, the energy efficiency of the "All active" scheme decreases when the number of RISs is 5, indicating the importance of RIS selection for increasing energy efficiency in green cellular networks.

	\subsection{User Admission Control}
	
	In this subsection, we illustrate the effectiveness of the proposed DC algorithm for the user admission control problem.
	We show the performance of the proposed DC algorithm with the following baseline schemes.
	\begin{itemize}
		\item \textbf{Smoothed $L_p$}: the BS leverages the smoothed $L_p$-minimization algorithm to maximize the number of admitted users \cite{shi2016ac}.
		\item \textbf{Proposed DC with random phase shifts}: the RIS phase-shift matrices are randomly generated. 
		Therefore, we only solve the problem (38), while dropping users using the same method as the proposed DC.
		\item \textbf{Proposed DC without RIS}: we solve problem (38) using the proposed DC algorithm without any RIS.
		\item \textbf{Exhaustive search}: we solve problems (38) and (41) using the proposed DC algorithm with all combinations of users and find the maximal active user set from the feasible solutions.
		
	\end{itemize}

	Next, we study the user admission capacities of different algorithms with respect to different QoS constraints.
	Here we set the target SINRs of all users to be equal.
	In Fig.~\ref{fig:uac}, we observe that the number of admitted users of all baselines decreases as the target SINR $\gamma_k^{\text{th}}$ increases.
	The proposed DC algorithm always admits more users than the smoothed $L_p$, the SDR, the proposed DC with random phase shifts, and the proposed DC without RIS.
	In particular, the proposed DC admits almost the same number of users compared with the exhaustive search algorithm, which shows the effectiveness of the proposed DC algorithm.
	Such performance gain of the proposed DC over the smoothed $L_p$ mainly comes from its exploitation ability for the sparse structure by matrix lifting without dropping the rank-one constraint.
	The performance gain of the proposed DC compared with the random RIS phase shifts and without RIS schemes demonstrates the effectiveness of deploying multiple RISs and optimizing the on-off status of the RISs in wireless networks.

\begin{figure}
	\centering
	\begin{minipage}[t]{\linewidth}
		\centering
		\includegraphics[scale = 0.4]{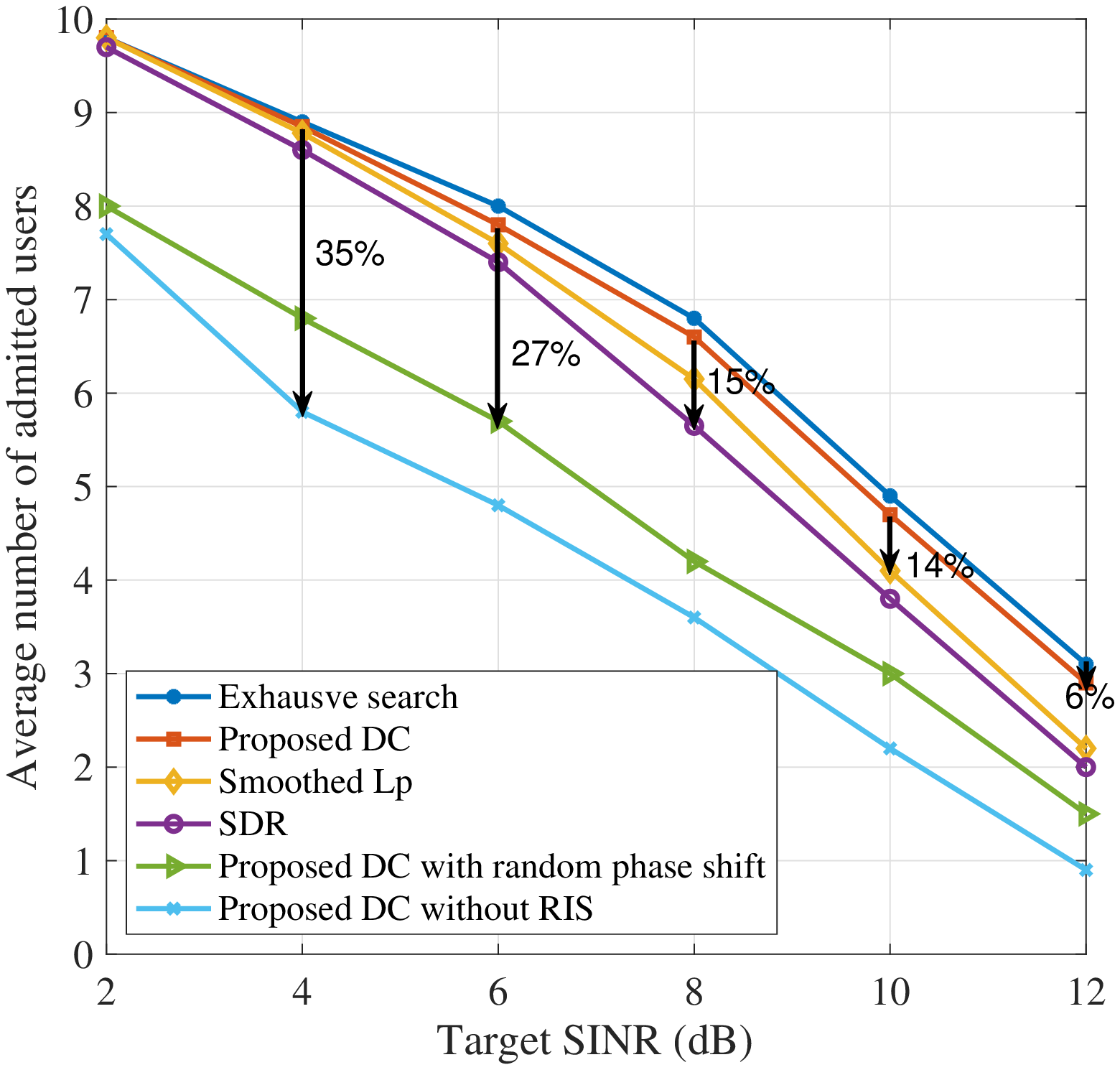}
		\caption{Average number of admitted users versus target SINR.}
		\label{fig:uac}
	\end{minipage}

	\begin{minipage}[t]{\linewidth}
		\centering
		\includegraphics[scale = 0.4]{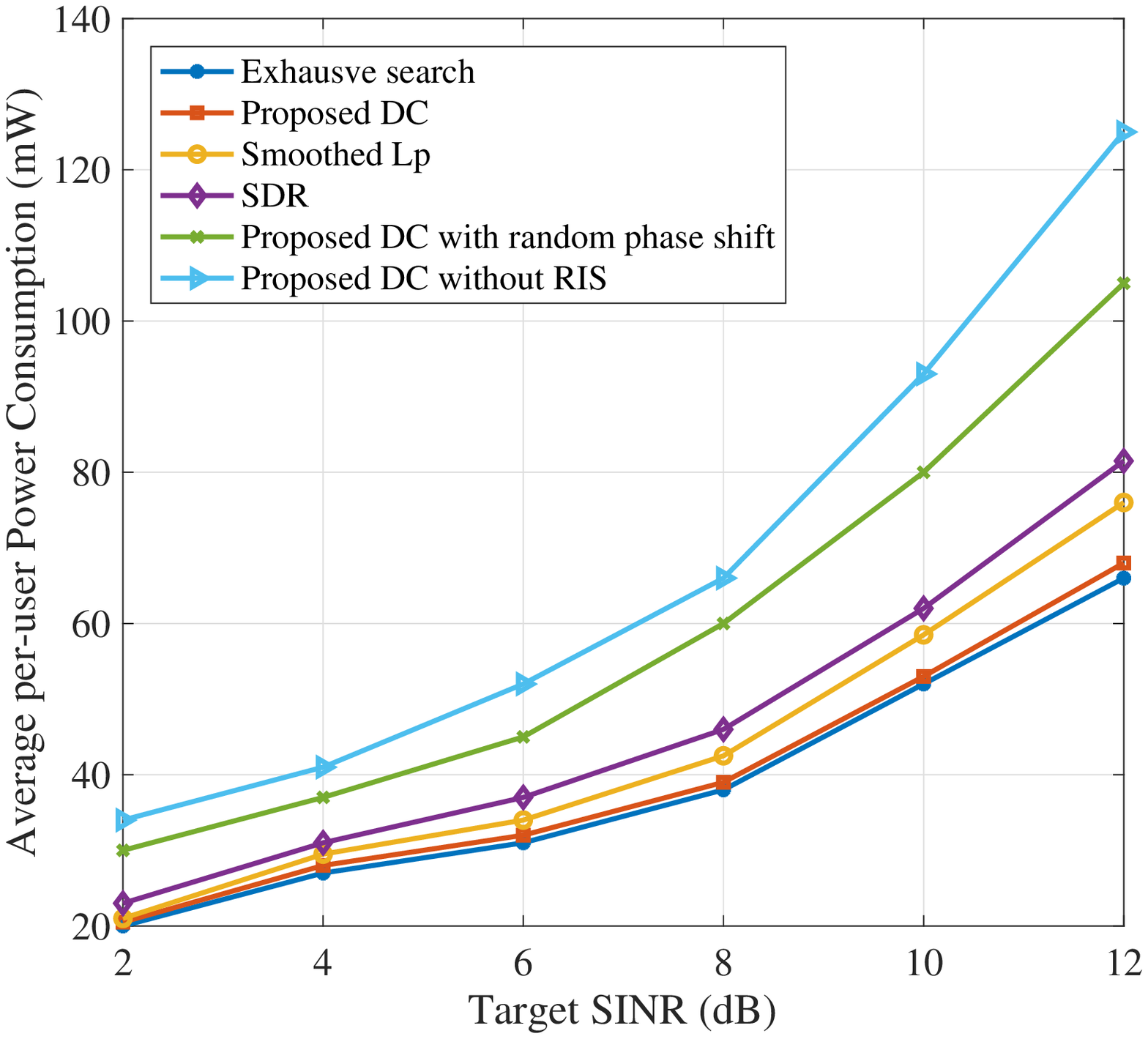}
		\caption{Average per-user power consumption versus target SINR.}
		\label{fig:pow}
	\end{minipage}

	\end{figure}

	In Fig. \ref{fig:pow}, we illustrate the per-user power consumption versus the target SINR of different algorithms.
	We observe that as the target SINR increases, the network requires higher power consumption to satisfy the QoS constraints of all users.
	The power consumption of the proposed DC algorithm is nearly the same as the exhaustive search algorithm and is much lower than other baseline schemes. 
	The obtained result demonstrates the superb performance of the proposed algorithm by exploiting the rank-one constraint.
	The proposed DC and the proposed DC with random phase shifts  scheme outperform the proposed DC without RIS, which shows the benefits of deploying RISs in terms of power control.
	Moreover, the performance loss of the proposed DC with random phase shifts compared with the proposed DC indicates the necessity to optimize RIS phase shifts for green cellular networks.
	
	\begin{figure}[t]
		\centering
		\includegraphics[scale = 0.4]{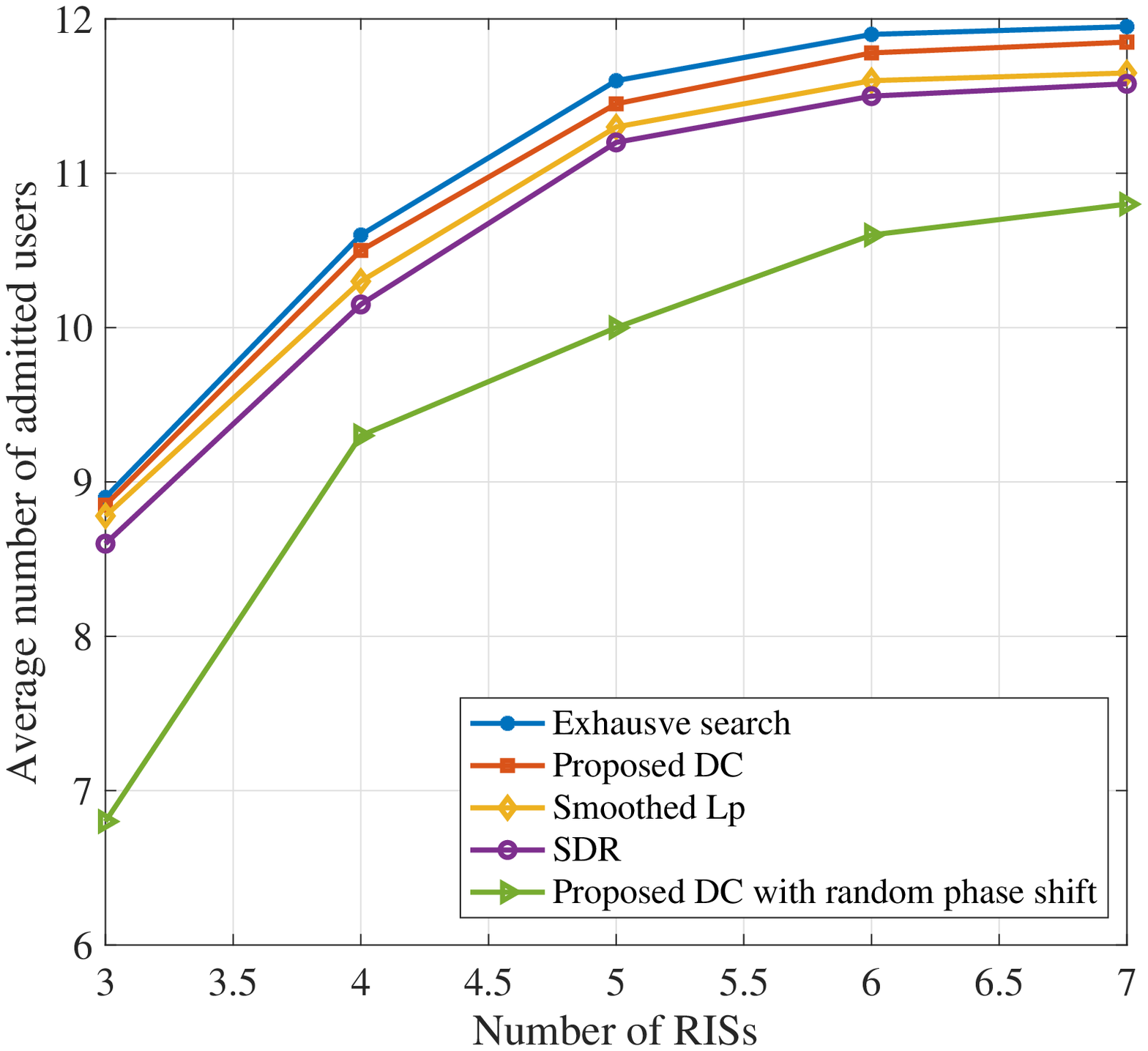}
		\caption{Average number of admitted users versus the number of RISs $L$.}
		\label{fig:npm_diff_L}
	\end{figure}

	In Fig.~\ref{fig:npm_diff_L}, we show the average number of admitted users versus the number of RISs $L$. 
	The locations of the additional RISs are randomly generated at the edge of a circle centered at $(70,40,0)$ with radius of $25$ meters. 
	Fig.~\ref{fig:npm_diff_L} demonstrates that the number of admitted users increases rapidly for a small number of RISs, however, this increase becomes slower for a larger number of RISs. 
	The reason comes from that a large number of RISs leads to high power consumption, which consequently decreases the growth of the users admission capacities. 
	Therefore, in practice, the number of RISs as well as the location should be carefully determined to avoid unnecessary implementation cost and power consumption cost.
	\section{Conclusions}
	\label{Sec_Con}
	In this paper, we investigate a downlink multi-antenna broadcast channel with multiple RISs assisting the transmission. 
		To improve the energy efficiency of the considered network, we depart from the existing works and for the first time, we jointly investigate the power control and the user admission problem.
		Specifically, we first devote to minimize the network power consumption by jointly optimizing the phase shifts of RISs, beamforming of the base stations, and the active RIS set.
		However, the infeasibility issue occurs when the QoS constraints cannot be guaranteed by all users.
		To address such infeasibility issue, we further investigate the user admission control problem. 
		A unified AO framework based on DC approach is then proposed to solve both the power minimization and the user admission control problems, followed by a low-complexity beamforming design algorithm, which acts as an alternative approach to enable the computation-sensitive wireless networks.
		Numerical results demonstrate that, the proposed DC-based AO framework significantly improves the energy efficiency. 
		Surprisingly, the proposed algorithm is capable of saving 25\% power compared to the setting with all RIS elements being active, while achieving almost the same performance as the exhaustive search scheme. In addition, for a given target SINR, the proposed DC algorithm can support more active users, resulting in a 14\% increase in user admission compared to the compared SDR algorithm.  
	\appendices

	\section{Proof of Lemma 1}\label{appd:lemma1}
		Recall problem (33), 
		\begin{subequations}\label{opt:ZF1}
			\begin{align}
				\mathop{\text{maximize}}_{\tilde{\bm{\theta}}_k}~~~ &|\tilde{\bm{\theta}}_k^H\bm{a}_{k,k}|\\
				\mathop{\text{subject to}}~~~&\tilde{\bm{\theta}}_k^H\bm{a}_{k,j}=0, \forall j\neq k,\label{equ:ZF_cons1}\\
				&|\tilde{\theta}_{k,i}|=1, \forall i\in\mathcal{I}_k.\label{equ:unit_cons_ZF1}
			\end{align}
		\end{subequations}
		Then, for constraint (\ref{opt:ZF1}b), we have
		\begin{align}
			\tilde{\theta}_{k,i}^*[\bm{a}_{k,j}]_i=-\sum_{r\in\mathcal{I}_k,r\neq i}[\bm{a}_{k,j}]_r\tilde{\theta}_{k,r}^*, \forall j\neq k.
		\end{align}
		Taking the absolute value at both sides yields
		\begin{align}\label{equ:app1_1}
			|\tilde{\theta}_{k,i}^*[\bm{a}_{k,j}]_i|=\Big|\sum_{r\in\mathcal{I}_k,r\neq i}[\bm{a}_{k,j}]_r\tilde{\theta}_{k,r}^*\Big|.
		\end{align}
		Further combining equation (\ref{equ:app1_1}) and constraint (\ref{opt:ZF1}c), we have
		\begin{align}
			|[\bm{a}_{k,j}]_i|&=\Big|\sum_{r\in\mathcal{I}_k,r\neq i}[\bm{a}_{k,j}]_r\tilde{\theta}_{k,r}^*\Big|\nonumber\\
			&\leq \sum_{r\in\mathcal{I}_k,r\neq i}\Big|[\bm{a}_{k,j}]_r\tilde{\theta}_{k,r}^*\Big|\nonumber\\
			&=\sum_{r\in\mathcal{I}_k,r\neq i}\Big|[\bm{a}_{k,j}]_r\Big|
		\end{align}
		where the inequality follows the triangle inequality. 
		Adding $|[\bm{a}_{k,j}]_i|$ on both sides, we have
		\begin{align}\label{equ:a1}
			2|[\bm{a}_{k,j}]_i|\neq \sum_{r\in\mathcal{I}_k}|[\bm{a}_{k,j}]_r|.
		\end{align}
		Since inequality (\ref{equ:a1}) should be satisfied for any $i\in\mathcal{I}_k$, conditions (\ref{equ:lemma1}) are obtained by using the $\max_{i\in\mathcal{I}_k}$ operation on both sides of (\ref{equ:a1}), which completes the proof.
		
	\section{Proof of Proposition 1}\label{appd:convergence}
	For the sequence $\{(\bm{\hat{\Theta}}^{t})\}$ generated by iteratively solving problem (\ref{t-DC_OP_The}), we denote the dual variables as $\bm{Y}_{\bm{\hat{\Theta}}}^{t}\in \partial_{\bm{\hat{\Theta}}^{t}}q$. 
	Due to the strong convexity of $q$, we have
	\begin{align}
		&q^{t+1}-q^{t}\geq \langle \Delta_t\bm{\hat{\Theta}},\bm{Y}_M^{t}\rangle  +\frac{\alpha}{2}\big(\|\Delta_t\bm{\hat{\Theta}}\|_F^2\big), \label{appd:eq1} \\
		&\langle \bm{\hat{\Theta}}^{t},\bm{Y}_{\hat{\Theta}}^{t} \rangle = q^{t}+{q^*}^{t}, \label{appd:fenchel_eq1}
	\end{align}
	where $\Delta_t\bm{\hat{\Theta}}=\bm{\hat{\Theta}}^{t+1}-\bm{\hat{\Theta}}^{t}$.
	Adding $g_1^{t+1}$ at both sides of (\ref{appd:eq1}), we have
	\begin{align}
		f_1^{t+1}\leq &g_1^{t+1}-q^{t}-\langle \Delta_t\bm{\hat{\Theta}},\bm{Y}_M^{t} \rangle -\frac{\alpha}{2}\big(\|\Delta_t\bm{\hat{\Theta}}\|_F^2\big). \label{appd:eq2}
	\end{align}
	
	For the update of primal variable $\{\bm{\hat{\Theta}}\}$ according to equation (\ref{t-iteration}), we have $\bm{Y}_{\hat{\Theta}}^{t}\in \partial_{\bm{\hat{\Theta}}^{t+1}} {g_1}$. 
	This implies that
	\begin{align}
		&g_1^{t}-g_1^{t+1}\geq \langle -\Delta_t\bm{\hat{\Theta}}, \bm{Y}_M^{t} \rangle+\frac{\alpha}{2}\big(\|\Delta_t\bm{\hat{\Theta}}\|_F^2\big), \label{appd:eq3} \\
		&\langle \bm{\hat{\Theta}}^{t+1},\bm{Y}_M^{t} \rangle = g_1^{t+1}+{g_1^*}^{t}. \label{appd:fenchel_eq2}
	\end{align}
	Similarly, by adding $-q^{t}$ at both sides of equation (\ref{appd:eq3}), we have
	\begin{align}
		f_1^{t}\geq &g_1^{t+1}-q^{t}+\langle  -\Delta_t\bm{\hat{\Theta}}, \bm{Y}_M^{t} \rangle+\frac{\alpha}{2}\big(\|\Delta_t\bm{\hat{\Theta}}\|_F^2\big). \label{appd:eq4}
	\end{align}
	Combine equation (\ref{appd:fenchel_eq1}) and equation (\ref{appd:fenchel_eq2}), we obtain
	\begin{equation}
		g_1^{t+1}-q^{t}+\langle  -\Delta_t\bm{\hat{\Theta}}, \bm{Y}_M^{t} \rangle ={f_1^*}^{t}, \label{appd:eq5}
	\end{equation}
	where $f_1^* = q^*-g_1^*$.
	By subtracting (\ref{appd:eq2}), (\ref{appd:eq4}) and (\ref{appd:eq5}), we have
	\begin{align}
		f_1^{t}&\geq {f_1^*}^{t}+\frac{\alpha}{2}\big(\|\Delta_t\bm{\hat{\Theta}}\|_F^2\big)\geq f_1^{t+1}+\alpha\big(\|\Delta_t\bm{\hat{\Theta}}\|_F^2\big).
	\end{align}

	The sequence $\{f_1^{t}\}$ is non-increasing. Since $f_1\geq 0$ always holds, we conclude that the sequence $\{f_1^{t}\}$ is strictly decreasing until convergence, i.e., $\lim_{t\rightarrow \infty} \big(\|\bm{\hat{\Theta}}^{t}-\bm{\hat{\Theta}}^{t+1}\|_F^2\big) = 0.$

	For every limit point, $f_1^{t+1}=f_1^{t}$, we have
	\begin{equation}
		\|\bm{\hat{\Theta}}^{t}-\bm{\hat{\Theta}}^{t+1}\|_F^2=0, f^{t+1}={f^*}^{t}=f^{t}.
	\end{equation}

	Then,
	\begin{align}
		{q^*}^{t}+q^{t+1}&={g}^{t}+g^{t+1} =\langle \bm{\hat{\Theta}}^{t+1},\bm{Y}_{\hat{\Theta}}^{t} \rangle, \bm{Y}_{\hat{\Theta}}^{t}\in\partial_{\bm{\hat{\Theta}}^{t+1}} q.
	\end{align}
	Therefore, $\bm{Y}_{\hat{\Theta}}^{t}\in \partial_{\bm{\hat{\Theta}}^{t+1}} g_1\cap\partial_{\bm{\hat{\Theta}}^{t+1}} q$, thus $(\{\bm{\hat{\Theta}}^{t+1}\})$ is a critical point of $f_1=g_1-q$. 
	Since
	\begin{align}
		\text{Avg}\Big(\|\bm{\hat{\Theta}}^{t}-\bm{\hat{\Theta}}^{t+1}\|_F^2\Big) &\leq\sum_{i=0}^{t}\frac{1}{\alpha(t+1)}(f_1^{[i]}-f_1^{[i+1]}) \leq\nonumber\\
		& \frac{1}{\alpha(t+1)}(f_1^{[0]}-f_1^{t+1}) \nonumber\\
		&\leq \frac{1}{\alpha(t+1)}(f_1^{[0]}-f_1^{\star})\nonumber,
	\end{align}
	we have
	\begin{align}
		\text{Avg}\Big(\|\bm{\hat{\Theta}}^{t}-\bm{\hat{\Theta}}^{t+1}\|_F^2\Big)&\leq \frac{f_1^{[0]}-f_1^{\star}}{\alpha(t+1)},
	\end{align}
	which completes the proof.
	
	\ifCLASSOPTIONcaptionsoff
	\newpage
	\fi
	
	\bibliographystyle{IEEEtran}
	\bibliography{ref}

\begin{thebibliography}{10}
\providecommand{\url}[1]{#1}
\csname url@samestyle\endcsname
\providecommand{\newblock}{\relax}
\providecommand{\bibinfo}[2]{#2}
\providecommand{\BIBentrySTDinterwordspacing}{\spaceskip=0pt\relax}
\providecommand{\BIBentryALTinterwordstretchfactor}{4}
\providecommand{\BIBentryALTinterwordspacing}{\spaceskip=\fontdimen2\font plus
\BIBentryALTinterwordstretchfactor\fontdimen3\font minus
  \fontdimen4\font\relax}
\providecommand{\BIBforeignlanguage}[2]{{%
\expandafter\ifx\csname l@#1\endcsname\relax
\typeout{** WARNING: IEEEtran.bst: No hyphenation pattern has been}%
\typeout{** loaded for the language `#1'. Using the pattern for}%
\typeout{** the default language instead.}%
\else
\language=\csname l@#1\endcsname
\fi
#2}}
\providecommand{\BIBdecl}{\relax}
\BIBdecl

\bibitem{rajatheva2020white}
N.~Rajatheva, I.~Atzeni, E.~Bjornson, A.~Bourdoux, S.~Buzzi, J.-B. Dore,
  S.~Erkucuk, M.~Fuentes, K.~Guan, Y.~Hu \emph{et~al.}, ``White paper on
  broadband connectivity in 6{G},'' \emph{arXiv preprint arXiv:2004.14247},
  2020.

\bibitem{yuan2021reconfigurable}
X.~Yuan, Y.-J.~A. Zhang, Y.~Shi, W.~Yan, and H.~Liu,
  ``Reconfigurable-intelligent-surface empowered wireless communications:
  Challenges and opportunities,'' \emph{IEEE Wireless Commun.}, vol.~28, no.~2,
  pp. 136--143, 2021.

\bibitem{yan2019passive}
W.~Yan, X.~Yuan, and X.~Kuai, ``Passive beamforming and information transfer
  via large intelligent surface,'' \emph{{IEEE} Wirel. Commun. Lett.}, vol.~9,
  no.~4, pp. 533--537, 2019.

\bibitem{yuan2021wc}
X.~Yuan, Y.~A. Zhang, Y.~Shi, W.~Yan, and H.~Liu,
  ``Reconfigurable-intelligent-surface empowered wireless communications:
  Challenges and opportunities,'' \emph{{IEEE} Wirel. Commun.}, vol.~28, no.~2,
  pp. 136--143, 2021.

\bibitem{wu2019ris}
Q.~Wu and R.~Zhang, ``Intelligent reflecting surface enhanced wireless network
  via joint active and passive beamforming,'' \emph{{IEEE} Trans. Wirel.
  Commun.}, vol.~18, no.~11, pp. 5394--5409, 2019.

\bibitem{christos2018meta}
C.~Liaskos, S.~Nie, A.~Tsioliaridou, A.~Pitsillides, S.~Ioannidis, and I.~F.
  Akyildiz, ``A new wireless communication paradigm through software-controlled
  metasurfaces,'' \emph{{IEEE} Commun. Mag.}, vol.~56, no.~9, pp. 162--169,
  2018.

\bibitem{basar2019wireless}
E.~Basar, M.~Di~Renzo, J.~De~Rosny, M.~Debbah, M.-S. Alouini, and R.~Zhang,
  ``Wireless communications through reconfigurable intelligent surfaces,''
  \emph{IEEE access}, vol.~7, pp. 116\,753--116\,773, 2019.

\bibitem{wang2020channel}
Z.~Wang, L.~Liu, and S.~Cui, ``Channel estimation for intelligent reflecting
  surface assisted multiuser communications: Framework, algorithms, and
  analysis,'' \emph{IEEE Trans. Wireless Commun.}, vol.~19, no.~10, pp.
  6607--6620, 2020.

\bibitem{he2020coordinated}
J.~He, K.~Yu, and Y.~Shi, ``Coordinated passive beamforming for distributed
  intelligent reflecting surfaces network,'' in \emph{Proc. IEEE Veh. Technol.
  Conf. (VTC) Spring, Antwerp, Belgium}, May. 2020.

\bibitem{wang2022tgcn}
X.~Wang, F.~Shu, W.~Shi, X.~Liang, R.~Dong, J.~Li, and J.~Wang, ``Beamforming
  design for irs-aided decode-and-forward relay wireless network,''
  \emph{{IEEE} Trans. Green Commun. Netw.}, vol.~6, no.~1, pp. 198--207, 2022.

\bibitem{wu2019towards}
Q.~Wu and R.~Zhang, ``Towards smart and reconfigurable environment: Intelligent
  reflecting surface aided wireless network,'' \emph{{IEEE} Commun. Mag.},
  vol.~58, no.~1, pp. 106--112, 2019.

\bibitem{bjornson2020rayleigh}
E.~Bj{\"o}rnson and L.~Sanguinetti, ``Rayleigh fading modeling and channel
  hardening for reconfigurable intelligent surfaces,'' \emph{{IEEE} Wirel.
  Commun. Lett.}, vol.~10, no.~4, pp. 830--834, 2020.

\bibitem{guo2020weighted}
H.~Guo, Y.-C. Liang, J.~Chen, and E.~G. Larsson, ``Weighted sum-rate
  maximization for reconfigurable intelligent surface aided wireless
  networks,'' \emph{{IEEE} Trans. Wirel. Commun.}, vol.~19, no.~5, pp.
  3064--3076, 2020.

\bibitem{shu2021tcom}
F.~Shu, Y.~Teng, J.~Li, M.~Huang, W.~Shi, J.~Li, Y.~Wu, and J.~Wang, ``Enhanced
  secrecy rate maximization for directional modulation networks via {IRS},''
  \emph{{IEEE} Trans. Commun.}, vol.~69, no.~12, pp. 8388--8401, 2021.

\bibitem{huang2019reconfigurable}
C.~Huang, A.~Zappone, G.~C. Alexandropoulos, M.~Debbah, and C.~Yuen,
  ``Reconfigurable intelligent surfaces for energy efficiency in wireless
  communication,'' \emph{IEEE Trans. Wireless Commun.}, vol.~18, no.~8, pp.
  4157--4170, Aug. 2019.

\bibitem{shi2021cc}
W.~Shi, J.~Li, G.~Xia, Y.~Wang, X.~Zhou, Y.~Zhang, and F.~Shu, ``Secure
  multigroup multicast communication systems via intelligent reflecting
  surface,'' \emph{China Commun.}, vol.~18, no.~3, pp. 39--51, 2021.

\bibitem{mahapatra2016ee}
R.~Mahapatra, Y.~Nijsure, G.~Kaddoum, N.~U. Hassan, and C.~Yuen, ``Energy
  efficiency tradeoff mechanism towards wireless green communication: {A}
  survey,'' \emph{{IEEE} Commun. Surv. Tutorials}, vol.~18, no.~1, pp.
  686--705, 2016.

\bibitem{wu2019intelligent}
Q.~Wu and R.~Zhang, ``Intelligent reflecting surface enhanced wireless network
  via joint active and passive beamforming,'' \emph{{IEEE} Trans. Wirel.
  Commun.}, vol.~18, no.~11, pp. 5394--5409, 2019.

\bibitem{hua2021greenris}
S.~Hua, Y.~Zhou, K.~Yang, Y.~Shi, and K.~Wang, ``Reconfigurable intelligent
  surface for green edge inference,'' \emph{{IEEE} Trans. Green Commun. Netw.},
  vol.~5, no.~2, pp. 964--979, 2021.

\bibitem{zhou2020hardware}
S.~Zhou, W.~Xu, K.~Wang, M.~D. Renzo, and M.~Alouini, ``Spectral and energy
  efficiency of {IRS}-assisted {MISO} communication with hardware
  impairments,'' \emph{{IEEE} Wirel. Commun. Lett.}, vol.~9, no.~9, pp.
  1366--1369, 2020.

\bibitem{wang2020active}
P.~Wang, J.~Fang, X.~Yuan, Z.~Chen, and H.~Li, ``Intelligent reflecting
  surface-assisted millimeter wave communications: Joint active and passive
  precoding design,'' \emph{{IEEE} Trans. Veh. Technol.}, vol.~69, no.~12, pp.
  14\,960--14\,973, 2020.

\bibitem{yang2021distributed}
Z.~Yang, M.~Chen, W.~Saad, W.~Xu, M.~Shikh{-}Bahaei, H.~V. Poor, and S.~Cui,
  ``Energy-efficient wireless communications with distributed reconfigurable
  intelligent surfaces,'' \emph{{IEEE} Trans. Wireless Commun.}, vol.~21,
  no.~1, pp. 665--679, 2022.

\bibitem{sun2020green}
S.~Sun, M.~Fu, Y.~Shi, and Y.~Zhou, ``Towards reconfigurable intelligent
  surfaces powered green wireless networks,'' in \emph{proc. {IEEE} Wirel.
  Commun. and Netw. Conf. (WCNC)}.\hskip 1em plus 0.5em minus 0.4em\relax
  {IEEE}, May 2020, pp. 1--6.

\bibitem{evaggelia2008ac}
E.~Matskani, N.~D. Sidiropoulos, Z.-Q. Luo, and L.~Tassiulas, ``Convex
  approximation techniques for joint multiuser downlink beamforming and
  admission control,'' \emph{{IEEE} Trans. Wirel. Commun.}, vol.~7, no.~7, pp.
  2682--2693, 2008.

\bibitem{shi2016ac}
Y.~Shi, J.~Cheng, J.~Zhang, B.~Bai, W.~Chen, and K.~B. Letaief, ``Smoothed
  {L}\({}_{\mbox{p}}\)-minimization for green cloud-{RAN} with user admission
  control,'' \emph{{IEEE} J. Sel. Areas Commun.}, vol.~34, no.~4, pp.
  1022--1036, 2016.

\bibitem{zhao2015uac}
J.~Zhao, T.~Q.~S. Quek, and Z.~Lei, ``User admission and clustering for uplink
  multiuser wireless systems,'' \emph{{IEEE} Trans. Veh. Technol.}, vol.~64,
  no.~2, pp. 636--651, 2015.

\bibitem{lid2021uac}
D.~Li, K.~Wang, H.~Zhu, Y.~Zhou, and Y.~Shi, ``Joint admission control and
  beamforming for intelligent reflecting surface aided wireless networks,'' in
  \emph{proc. IEEE Conf. Comm. (ICC) workshops}, 2021.

\bibitem{wei2021tc}
L.~Wei, C.~Huang, G.~C. Alexandropoulos, C.~Yuen, Z.~Zhang, and M.~Debbah,
  ``Channel estimation for ris-empowered multi-user {MISO} wireless
  communications,'' \emph{{IEEE} Trans. Commun.}, vol.~69, no.~6, pp.
  4144--4157, 2021.

\bibitem{liu2020jsac}
H.~Liu, X.~Yuan, and Y.~A. Zhang, ``Matrix-calibration-based cascaded channel
  estimation for reconfigurable intelligent surface assisted multiuser
  {MIMO},'' \emph{{IEEE} J. Sel. Areas Commun.}, vol.~38, no.~11, pp.
  2621--2636, 2020.

\bibitem{zhou2020robust}
G.~Zhou, C.~Pan, H.~Ren, K.~Wang, M.~D. Renzo, and A.~Nallanathan, ``Robust
  beamforming design for intelligent reflecting surface aided {MISO}
  communication systems,'' \emph{{IEEE} Wirel. Commun. Lett.}, vol.~9, no.~10,
  pp. 1658--1662, 2020.

\bibitem{lu2020wcl}
X.~Lu, W.~Yang, X.~Guan, Q.~Wu, and Y.~Cai, ``Robust and secure beamforming for
  intelligent reflecting surface aided mmwave {MISO} systems,'' \emph{{IEEE}
  Wirel. Commun. Lett.}, vol.~9, no.~12, pp. 2068--2072, 2020.

\bibitem{han2019tvt}
Y.~Han, W.~Tang, S.~Jin, C.~Wen, and X.~Ma, ``Large intelligent
  surface-assisted wireless communication exploiting statistical {CSI},''
  \emph{{IEEE} Trans. Veh. Technol.}, vol.~68, no.~8, pp. 8238--8242, 2019.

\bibitem{wu2018intelligent}
Q.~Wu and R.~Zhang, ``Intelligent reflecting surface enhanced wireless network
  via joint active and passive beamforming,'' \emph{IEEE Trans. Wireless
  Commun.}, 2018.

\bibitem{wang2023arxiv}
J.~Wang, W.~Tang, S.~Jin, X.~Li, and M.~Matthaiou, ``Static power consumption
  modeling and measurement of reconfigurable intelligent surfaces,''
  \emph{CoRR}, vol. abs/2303.00299, 2023.

\bibitem{long2021twc}
R.~Long, Y.~Liang, Y.~Pei, and E.~G. Larsson, ``Active reconfigurable
  intelligent surface-aided wireless communications,'' \emph{{IEEE} Trans.
  Wirel. Commun.}, vol.~20, no.~8, pp. 4962--4975, 2021.

\bibitem{john2017tmtt}
J.~Wood, ``System-level design considerations for digital pre-distortion of
  wireless base station transmitters,'' \emph{IEEE Trans. Microw. Theory
  Techn.}, vol.~65, no.~5, pp. 1880--1890, 2017.

\bibitem{ami2006tsp}
A.~Wiesel, Y.~C. Eldar, and S.~Shamai, ``Linear precoding via conic
  optimization for fixed {MIMO} receivers,'' \emph{{IEEE} Trans. Signal
  Process.}, vol.~54, no.~1, pp. 161--176, 2006.

\bibitem{cvx}
M.~Grant and S.~Boyd, ``{CVX}: Matlab software for disciplined convex
  programming, version 2.1,'' Mar. 2014.

\bibitem{lee2011mixed}
J.~Lee and S.~Leyffer, \emph{Mixed integer nonlinear programming}.\hskip 1em
  plus 0.5em minus 0.4em\relax Springer, 2011.

\bibitem{So2007On}
M.~C. So, J.~Zhang, and Y.~Ye, ``On approximating complex quadratic
  optimization problems via semidefinite programming relaxations,'' \emph{Math.
  Program.}, vol. 110, no.~1, pp. 93--110, 2007.

\bibitem{luo2010sdr}
Z.~Luo, W.~Ma, A.~M. So, Y.~Ye, and S.~Zhang, ``Semidefinite relaxation of
  quadratic optimization problems,'' \emph{{IEEE} Signal Process. Mag.},
  vol.~27, no.~3, pp. 20--34, 2010.

\bibitem{chen2017tc}
E.~Chen and M.~Tao, ``Admm-based fast algorithm for multi-group multicast
  beamforming in large-scale wireless systems,'' \emph{{IEEE} Trans. Commun.},
  vol.~65, no.~6, pp. 2685--2698, 2017.

\bibitem{tao1997convex}
P.~D. Tao and L.~T.~H. An, ``Convex analysis approach to dc programming:
  theory, algorithms and applications,'' \emph{Acta mathematica vietnamica},
  vol.~22, no.~1, pp. 289--355, 1997.

\bibitem{rock2015convex}
R.~T. Rockafellar, \emph{Convex Analysis}, ser. Princeton, NJ.\hskip 1em plus
  0.5em minus 0.4em\relax USA: Princeton Univ. Press, 2015.

\bibitem{higham1990analysis}
N.~J. Higham, \emph{Analysis of the Cholesky decomposition of a semi-definite
  matrix}.\hskip 1em plus 0.5em minus 0.4em\relax Oxford University Press,
  1990.

\bibitem{luo2020sdr}
Z.~Luo, W.~Ma, A.~M. So, Y.~Ye, and S.~Zhang, ``Semidefinite relaxation of
  quadratic optimization problems,'' \emph{{IEEE} Signal Process. Mag.},
  vol.~27, no.~3, pp. 20--34, 2010.

\bibitem{emil2014spm}
E.~Bj{\"{o}}rnson, M.~Bengtsson, and B.~E. Ottersten, ``Optimal multiuser
  transmit beamforming: {A} difficult problem with a simple solution structure
  [lecture notes],'' \emph{{IEEE} Signal Process. Mag.}, vol.~31, no.~4, pp.
  142--148, 2014.

\bibitem{yang2021tc}
Z.~Yang, W.~Xu, C.~Huang, J.~Shi, and M.~Shikh{-}Bahaei, ``Beamforming design
  for multiuser transmission through reconfigurable intelligent surface,''
  \emph{{IEEE} Trans. Commun.}, vol.~69, no.~1, pp. 589--601, 2021.

\bibitem{boyd2004convex}
S.~Boyd, S.~P. Boyd, and L.~Vandenberghe, \emph{Convex optimization}.\hskip 1em
  plus 0.5em minus 0.4em\relax Cambridge university press, 2004.

\bibitem{bertsekas1997nonlinear}
D.~P. Bertsekas, \emph{Nonlinear programming 3rd edition}.\hskip 1em plus 0.5em
  minus 0.4em\relax Taylor \& Francis, 2016.

\bibitem{bho2022arxiv}
B.~Matthiesen, Y.~Mao, A.~Dekorsy, P.~Popovski, and B.~Clerckx, ``Globally
  optimal spectrum- and energy-efficient beamforming for rate splitting
  multiple access,'' \emph{CoRR}, vol. abs/2204.00273, 2022.

\bibitem{3GPP}
``Further advancements for e-utra physical layer aspects,'' \emph{(Release 9)},
  document 3GPP TS 36.814, Mar. 2010.

\end{thebibliography}

\end{document}